\newtheorem{theorem}{Theorem}
\newtheorem{lemma}{Lemma}
\newtheorem{remark}{Remark}
\newtheorem{corollary}{Corollary}
\newtheorem{problem}{Problem}
\newtheorem{proc}{Procedure}
\def\BibTeX{{\rm B\kern-.05em{\sc i\kern-.025em b}\kern-.08em
    T\kern-.1667em\lower.7ex\hbox{E}\kern-.125emX}}
\DeclareMathOperator{\vect}{vec}
\DeclareMathOperator{\svec}{svec}
\begin{document}
\title{Robust Policy Iteration for Continuous-time Linear Quadratic Regulation}
\author{Bo Pang, \IEEEmembership{Student Member, IEEE}, Tao Bian, \IEEEmembership{Member, IEEE} and Zhong-Ping Jiang, \IEEEmembership{Fellow, IEEE}
\thanks{This work has been supported in part by the U.S. National Science Foundation under Grants ECCS-1501044 and EPCN-1903781.}
\thanks{B. Pang and Z. P. Jiang are with the Control and Networks Lab, Department of Electrical and Computer Engineering, Tandon School of Engineering, New York University, 370 Jay Street, Brooklyn, NY 11201, USA (e-mail: bo.pang@nyu.edu; zjiang@nyu.edu).
}%
\thanks{T. Bian is with the Bank of America Merrill Lynch, One Bryant Park, New York, NY 10036, USA (e-mail: tbian@nyu.edu).}
}

\renewcommand{\arraystretch}{1.2}

\maketitle

\begin{abstract}
This paper studies the robustness of policy iteration in the context of continuous-time infinite-horizon linear quadratic regulation (LQR) problem. It is shown that Kleinman's policy iteration algorithm is inherently robust to small disturbances and enjoys local input-to-state stability in the sense of Sontag. More precisely,  whenever the disturbance-induced input term in each iteration is bounded and small,
the solutions of the policy iteration algorithm are also bounded and enter a small neighborhood of the optimal solution of the LQR problem. Based on this result, an off-policy data-driven policy iteration algorithm for the LQR problem is shown to be robust when the system dynamics are subjected to small additive unknown bounded disturbances. The theoretical results are validated by a numerical example.
\end{abstract}

\section{INTRODUCTION}
As an important iterative method in reinforcement learning (RL) and approximate/adaptive dynamic programming (ADP), policy iteration has been widely studied by researchers and utilized in different kinds of real-life applications by practitioners \cite{sutton2018reinforcement,jiang2017robust,Frank-book1,BertsekasOptimalControl2}. Policy iteration involves two steps, \textit{policy evaluation} and \textit{policy improvement}. In policy evaluation, a given control policy is evaluated based on a scalar performance index. Then this performance index is utilized to generate a new control policy in policy improvement. These two steps are iterated in turn, to find the solution of the optimal control problem at hand. When all the information involved in this process is exactly known, the convergence to the optimal solution can be provably guaranteed, by exploiting the monotonicity property of the policy improvement step. That is, the performance of the newly generated policy is no worse than that of the given policy in each iteration. Over the past decades, various versions of policy iteration have been proposed, for diverse optimal control problems (see \cite{sutton2018reinforcement,jiang2017robust,Frank-book1,BertsekasOptimalControl2} and the reference therein).

In reality, policy evaluation or policy improvement can hardly be implemented precisely, because of the existence of various errors that may be induced by function approximation, state estimation, measurement noise, external disturbance and so on. Therefore, a natural question to ask is: when is a policy iteration algorithm robust in the presence of disturbance inputs? In spite of the popularity of policy iteration, its robustness issue has not been fully understood yet, due to the inherent nonlinearity of the process and the (possible) loss of monotonicity in the presence of errors. The problem becomes more complex when the state and control input spaces are unbounded and continuous, and the stability issue needs to be addressed. Some results about the robustness of policy iteration for optimal control problem with unbounded state/control spaces are available in previous literature \cite{BertsekasOptimalControl2,boussios1998approach,hylla2011extension,pmlr-v89-abbasi-yadkori19a,krauth2019finite}. An error bound is derived in \cite[Proposition 3.6]{BertsekasOptimalControl2}, for the discounted optimal control of discrete-time systems. Sufficient conditions on the error are given in \cite[Chapter 2]{hylla2011extension} and \cite[Chapter 2]{boussios1998approach} for continuous-time linear and nonlinear systems respectively, to guarantee that the newly generated control policy is stabilizing and the monotonicity is preserved. Uniform boundedness of the solutions given by policy iteration in the presence of errors is established in \cite[Lemma 5.1]{pmlr-v89-abbasi-yadkori19a}, for the discrete-time LQR problem. Utilizing the fact that the discrete-time Riccati operator is contractive in certain metric, it is shown in \cite[Appendix B]{krauth2019finite} that the convergence to the optimal solution is still achieved, if the error converges to zero.

In this paper, we investigate the robustness of the first policy iteration algorithm proposed by Kleinman \cite{1098829} for the continuous-time LQR problem. By regarding the process of policy iteration as a discrete-time nonlinear system, we firstly prove that the optimal solution of the LQR problem is a locally exponentially stable equilibrium of the error-free policy iteration. Secondly, with the aid of this observation, we show that the policy iteration with the errors as the input is locally input-to-state stable \cite{sontag2008input}. That is, if the policy iteration starts from an initial solution close to the optimal solution, and the errors are small and bounded, the discrepancies between the solutions generated by our proposed policy iteration algorithm and the optimal solution will also be small and bounded. Thirdly, we demonstrate that for any initial stabilizing control gain, as long as the errors are small, the approximate solution given by policy iteration will eventually enter a small neighbourhood of the optimal solution. Finally, as an application of the derived results, an off-policy data-driven policy iteration algorithm originally proposed in \cite{JIANG20122699} is proven to be robust, when the system dynamics are disturbed by small additive unknown disturbances. Simulations on a numerical example validate our results.

The rest of this paper is organized as follows: Section \ref{section_problem_formulation} introduces the problem formulation and preliminaries. Section \ref{section_robust_analysis} presents the robustness analysis of Kleinman's model-based policy iteration algorithm. Section \ref{section_datadriven_PI} demonstrates the robustness of the data-driven off-policy policy iteration in \cite{JIANG20122699}. The efficacy of the robust policy iteration method is validated by a numerical example in Section \ref{section_simulation}. Section \ref{section_conclusion} closes the paper with some concluding remarks. The proofs of all the theoretical results in this paper are given in the Appendix.

\textbf{Notations.} $\mathbb{R}$ ($\mathbb{R}_+$) is the set of all real (nonnegative) numbers; $\mathbb{Z}_+$ denotes the set of nonnegative integers; $\mathbb{S}^{n}$ is the set of all real symmetric matrices of order $n$; $\otimes$ denotes the Kronecker product; $I_n$ denotes the identity matrix with dimension $n$; $\Vert \cdot\Vert_F$ is the Frobenius norm; $\Vert\cdot\Vert_2$ is the 2-norm for vectors and the induced 2-norm for matrices; for signal $u:\mathbb{F}\rightarrow \mathbb{R}^{n\times m}$, $\Vert u\Vert_\infty$ denotes its $l^\infty$-norm when $\mathbb{F} = \mathbb{Z}_+$, and $L^\infty$-norm when $\mathbb{F} = \mathbb{R}_+$. For matrices $X\in \mathbb{R}^{m\times n}$, $Y\in \mathbb{S}^m$, and vector $v\in \mathbb{R}^{n}$, define 
\begin{align*}
    \vect(X) &= [\begin{array}{cccc}
    X_1^T & X_2^T & \cdots & X_n^T
    \end{array}]^T, \\
    \svec(Y) &= [y_{11},\sqrt{2}y_{12},\cdots,\sqrt{2}y_{1m},y_{22},\sqrt{2}y_{23}, \\ 
    &\cdots,\sqrt{2}y_{m-1,m},y_{m,m}]^T\in \mathbb{R}^{\frac{1}{2}m(m+1)}, \\
    \tilde{v} &= \svec(vv^T),
\end{align*}
where $X_i$ is the $i$th column of $X$. For $Z\in\mathbb{R}^{m\times n}$, $U\in\mathbb{S}^{m+n}$, define
$$\mathcal{H}(U,Z) = \left[\begin{array}{cc}
                I_n &  -Z^T
                \end{array}\right] U\left[\begin{array}{c}
                I_n  \\
                -Z
                \end{array}\right].$$
Define $\mathcal{B}_{r}(Z) = \{X\in\mathbb{R}^{m\times n}\vert \Vert X - Z\Vert_F<r\}$ and $\mathcal{\bar{B}}_{r}(Z)$ as the closure of $\mathcal{B}_{r}(Z)$. $Z^\dagger$ is the Moore-Penrose inverse of matrix $Z$. A function $\gamma:\mathbb{R}_+\rightarrow\mathbb{R}_+$ is said to be of class $\mathcal{K}$ if it is continuous, strictly increasing and vanishes at the origin. A function $\beta:\mathbb{R}_+\times \mathbb{R}_+\rightarrow\mathbb{R}_+$ is said to be of class $\mathcal{K}\mathcal{L}$ if $\beta(\cdot,t)$ is of class $\mathcal{K}$ for every fixed $t\geq 0$ and, for every fixed $r\geq 0$, $\beta(r,t)$ decreases to $0$ as $t\rightarrow\infty$.

\section{Problem Formulation and Prelinimaries}\label{section_problem_formulation}
Consider linear time-invariant systems of the form
\begin{equation}\label{LTI_sys}
    \dot{x} = Ax + Bu,\quad x(0) = x_0
\end{equation}
where $x\in\mathbb{R}^n$ is the system state, $u\in\mathbb{R}^m$ is the control input, $x(0)=x_0\in\mathbb{R}^n$ is the initial condition, $A\in\mathbb{R}^{n\times n}$ and $B\in\mathbb{R}^{n\times m}$, $\left(A,B\right)$ is controllable. The classic LQR problem is to find a controller $u$ in order to minimize the following cost
\begin{equation}\label{cost_deterinistic}
    J(x_0,u) = \int_{0}^\infty x^TQx + u^TRu \dd t,
\end{equation}
where $Q\in\mathbb{S}^{n}$ is positive semidefinite, $(A,Q^{1/2})$ is observable and $R\in\mathbb{S}^{m}$ is positive definite. The LQR problem admits the optimal controller $u^* = -K^*x$, where $K^*=R^{-1}B^TP^*$ and $P^*\in\mathbb{S}^n$ is the unique positive definite solution of the algebraic Riccati equation (ARE)
\begin{equation}\label{ARE}
A^TP+PA-PBR^{-1}B^TP+Q=0,
\end{equation}
and $A-BK^*$ is Hurwitz. See \cite[Section 6.2.2]{10.2307/j.ctvcm4g0s} for details.

Kleinman's policy iteration algorithm aimed at solving (\ref{ARE}) iteratively is presented below, which is an equivalent reformulation of the original results in \cite{1098829}. For convenience, we say that a control gain $K\in\mathbb{R}^{n\times m}$ is stabilizing if $A-BK$ is Hurwitz.
\begin{proc}[Kleinman's Policy Iteration]\label{procedure_policy_itration}
\ \par
\begin{enumerate}[label=\arabic*)]
    \item Choose a stabilizing control gain $K_1$, and let $i=1$.
    \item\label{standard_PI_PE} (Policy evaluation) Evaluate the performance of control gain $K_i$, by solving 
    \begin{equation}\label{RPI_PE}
       \mathcal{H}(G_i,K_i)=0
    \end{equation}
    for $P_i\in\mathbb{S}^n$, where
    \begin{equation*}
    \begin{split}
        G_i &= \left[
                    \begin{array}{c|c}
                    G_{i,11} & G_{i,21}^T \\
                    \hline
                    G_{i,21} & G_{i,22}
            \end{array}
            \right] \\ 
            &\triangleq \left[\begin{array}{cc}
                    Q + A^TP_i + P_iA & P_iB \\
                    B^TP_i & R
            \end{array}\right].            
    \end{split}
    \end{equation*}
    \item (Policy improvement) \label{standard_PI_PI}Get the improved policy by
    \begin{equation}\label{RPI_PI}
        K_{i+1} = G^{-1}_{i,22}G_{i,21}.
    \end{equation}
    \item Set $i\gets i+1$ and go back to Step \ref{standard_PI_PE}.
\end{enumerate}
\end{proc}
\begin{theorem}[\cite{1098829}]\label{theorem_standard_PI}
    In Procedure \ref{procedure_policy_itration} we have:
    \begin{enumerate}[label=\roman*)]
        \item $A-BK_i$ is Hurwitz for all $i=1,2,\cdots$.  \label{Hurwitz_matrix}
        \item $P_1\geq P_2 \geq P_3\geq \cdots \geq P^*$.\label{PI_monotone}
        \item $\lim_{i\rightarrow\infty}P_i=P^*$, $\lim_{i\rightarrow\infty}K_i = K^*$.\label{PI_converge}
    \end{enumerate}
\end{theorem}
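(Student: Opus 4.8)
The plan is to prove items \ref{Hurwitz_matrix} and \ref{PI_monotone} jointly by induction on $i$, and then deduce \ref{PI_converge} from monotonicity, a uniform lower bound, and a limiting argument. The first step is to decode the notation: expanding $\mathcal{H}(G_i,K_i)=0$ shows that the policy-evaluation equation \eqref{RPI_PE} is precisely the Lyapunov equation
\begin{equation*}
(A-BK_i)^{T}P_i+P_i(A-BK_i)+Q+K_i^{T}RK_i=0,
\end{equation*}
while the policy-improvement step \eqref{RPI_PI} reads $K_{i+1}=R^{-1}B^{T}P_i$; so Procedure \ref{procedure_policy_itration} is exactly Kleinman's scheme. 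For the induction base, $K_1$ is stabilizing by hypothesis, hence $A-BK_1$ is Hurwitz and the above Lyapunov equation (at $i=1$) has a unique solution $P_1$, which is positive semidefinite and in fact positive definite because $(A,Q^{1/2})$ is observable.

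The heart of the argument is two ``completion of squares'' identities. Assume $K_i$ is stabilizing with associated solution $P_i\ge 0$. Writing $A-BK_i=(A-BK_{i+1})+B(K_{i+1}-K_i)$ and using $B^{T}P_i=RK_{i+1}$, the Lyapunov equation for $P_i$ rearranges into
\begin{equation*}
(A-BK_{i+1})^{T}P_i+P_i(A-BK_{i+1})+Q+K_{i+1}^{T}RK_{i+1}+(K_i-K_{i+1})^{T}R(K_i-K_{i+1})=0;
\end{equation*}
and, likewise, rewriting the ARE \eqref{ARE} with closed-loop matrix $A-BK_i$ and completing the square yields $(A-BK_i)^{T}P^{*}+P^{*}(A-BK_i)+Q+K_i^{T}RK_i-(K_i-K^{*})^{T}R(K_i-K^{*})=0$. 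In the first identity the forcing term is positive semidefinite and dominates $Q$; since $(A,Q^{1/2})$ is observable, the pair formed by $A-BK_{i+1}$ and a square root of this forcing term is detectable, so a positive semidefinite $P_i$ solving that equation forces $A-BK_{i+1}$ to be Hurwitz, giving \ref{Hurwitz_matrix} at index $i+1$. Then $P_{i+1}$ is well defined, and subtracting the Lyapunov equation for $P_{i+1}$ from the first identity gives
\begin{equation*}
(A-BK_{i+1})^{T}(P_i-P_{i+1})+(P_i-P_{i+1})(A-BK_{i+1})+(K_i-K_{i+1})^{T}R(K_i-K_{i+1})=0,
\end{equation*}
whence $P_i\ge P_{i+1}$ since $A-BK_{i+1}$ is Hurwitz and the last term is positive semidefinite; subtracting the second identity from the Lyapunov equation for $P_i$ gives $P_i\ge P^{*}$ in the same way. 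This closes the induction and proves \ref{Hurwitz_matrix} and \ref{PI_monotone}.

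For \ref{PI_converge}, the matrices $\{P_i\}$ form a nonincreasing sequence (in the positive semidefinite order) bounded below by $P^{*}$, so $x^{T}P_i x$ converges for each $x\in\mathbb{R}^{n}$ and hence, by polarization, $P_i\to P_\infty$ entrywise for some symmetric $P_\infty\ge P^{*}\ge 0$. By continuity $K_i=R^{-1}B^{T}P_{i-1}\to K_\infty:=R^{-1}B^{T}P_\infty$, and passing to the limit in the Lyapunov equation for $P_i$ yields $(A-BK_\infty)^{T}P_\infty+P_\infty(A-BK_\infty)+Q+K_\infty^{T}RK_\infty=0$; substituting $K_\infty=R^{-1}B^{T}P_\infty$ and expanding reduces this to the ARE \eqref{ARE}, so $P_\infty=P^{*}$ by uniqueness of its positive semidefinite solution, and therefore $K_\infty=K^{*}$.

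I expect the only genuinely non-routine step to be finding the completion-of-squares rearrangement that re-expresses the Lyapunov equation for $P_i$ with $A-BK_{i+1}$ (rather than $A-BK_i$) as the closed-loop matrix, together with its analogue involving $P^{*}$; everything after that — the detectability-to-Hurwitz implication, monotone convergence of symmetric matrices, and the passage to the limit — is standard.
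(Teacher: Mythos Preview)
The paper does not give its own proof of Theorem~\ref{theorem_standard_PI}; it simply attributes the result to Kleinman~\cite{1098829} and uses it as a black box. Your proposal is the standard Kleinman argument and is correct in substance: once one unwinds $\mathcal{H}(G_i,K_i)=0$ into the closed-loop Lyapunov equation and $K_{i+1}=R^{-1}B^TP_i$, the two completion-of-squares identities you wrote are exactly the engine, and the monotone-convergence plus limiting ARE step is routine.

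One small point to tighten: in the inductive step you invoke ``a positive semidefinite $P_i$ solving that equation forces $A-BK_{i+1}$ to be Hurwitz.'' The Lyapunov/detectability implication you are using actually needs $P_i>0$, not merely $P_i\ge 0$. You already have this: either carry $P_i\ge P^*>0$ as part of the joint induction hypothesis, or note directly that whenever $A-BK_i$ is Hurwitz the pair $\bigl(A-BK_i,\,(Q+K_i^TRK_i)^{1/2}\bigr)$ is observable (any unobservable eigenvector would satisfy $Qv=0$ and $K_iv=0$, hence be an unobservable mode of $(A,Q^{1/2})$), so the integral representation of $P_i$ gives $P_i>0$. With that adjustment the argument is complete.
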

\begin{remark}\label{why_name_policy_evaluation}
    Step \ref{standard_PI_PE} in Procedure \ref{procedure_policy_itration} is named policy evaluation because $P_i$ is positive semidefinite, and the cost induced by control law $u_i=-K_ix$ when applied to system \eqref{LTI_sys} is $J(x_0, u_i) = x_0^TP_ix_0$.
\end{remark}
In Procedure \ref{procedure_policy_itration}, the exact knowledge of $A$ and $B$ is required, as the solution to \eqref{RPI_PE} relies upon $A$ and $B$. So, Kleinman's original policy iteration algorithm is model-based. However, in practice, very often we only have access to incomplete information required to solve the problem. In other words, each policy evaluation step will result in inaccurate estimation. Thus we are interested in studying the following problem.
\begin{problem}
If $G_i$ is replaced by an approximated matrix $\hat{G}_i$, will the conclusions in Theorem \ref{theorem_standard_PI} still hold?
\end{problem}
\begin{remark}\label{sources_of_errors}
The difference between $\hat{G}_i$ and $G_i$ can be attributed to errors from various sources, for example: the noises in RL and ADP algorithms; the estimation errors of $A$ and $B$ in indirect adaptive control and system identification \cite{mareels2012adaptive}; the residual caused by an early termination of the iteration to numerically solve ARE (\ref{ARE}), in order to save computational efforts \cite{hylla2011extension}; approximate values of $Q$ and $R$ in inverse optimal control, due to the absence of exact knowledge of the cost function \cite{6880317}.
\end{remark}

\section{Robustness Analysis of Policy Iteration}\label{section_robust_analysis}
In this section, we show that as long as the discrepancy between $G_i$ and $\hat{G}_i$ is small, the discrepancies between the performance value of the generated policies using $\hat{G}_i$ and that of the optimal policy will also be small in the limit.

We firstly reinterpret Procedure \ref{procedure_policy_itration} as a discrete-time nonlinear system. To this end, suppose
\begin{equation}\label{start_with_initial_matrix_standard_PI}
    K_1 = R^{-1}B^TP_0,
\end{equation}
where $P_0\in\mathbb{S}^{n}$ is chosen such that $K_1$ is stabilizing. Such a $P_0$ always exists under the above controllability and observability assumptions. The sequence $\{P_i\}_{i=0}^{\infty}$ generated by Procedure \ref{procedure_policy_itration} with condition \eqref{start_with_initial_matrix_standard_PI} can be regarded as the state, and the iteration index $i$ can be regarded as time. Then Procedure \ref{procedure_policy_itration} is a discrete-time nonlinear system and $P^*$ is an equilibrium by Theorem \ref{theorem_standard_PI}. The next lemma states that $P^*$ is actually a locally exponentially stable equilibrium.
\begin{lemma}\label{lemma_local_exponentail_stable}
    For any $\sigma<1$, there exists a $\delta_0(\sigma)>0$, such that for any $P_i\in\mathcal{B}_{\delta_0}(P^*)$, $\mathcal{A}(P_i)$ is Hurwitz and
    $$\Vert P_{i+1}-P^* \Vert_F \leq \sigma\Vert P_i-P^*\Vert_F.$$
\end{lemma}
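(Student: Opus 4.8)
The plan is to recognize one step of the iteration as Newton's method on the Riccati map and to harvest the resulting quadratic contraction near $P^*$. First I would record what the map is: under \eqref{start_with_initial_matrix_standard_PI}, one step of Procedure \ref{procedure_policy_itration} sends $P_i$ to the unique solution $P_{i+1}$ of the Lyapunov equation
\begin{equation*}
\mathcal{A}(P_i)^TP_{i+1}+P_{i+1}\mathcal{A}(P_i)+Q+P_iBR^{-1}B^TP_i=0,
\end{equation*}
with $\mathcal{A}(P):=A-BR^{-1}B^TP$ (so $\mathcal{A}(P_i)=A-BK_{i+1}$), which is exactly Newton's iteration for $\mathcal{R}(P):=A^TP+PA-PBR^{-1}B^TP+Q$, whose Fr\'echet derivative at $P$ in the direction $X$ is $\mathcal{A}(P)^TX+X\mathcal{A}(P)$. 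By Theorem \ref{theorem_standard_PI} (read in the error-free case), $\mathcal{A}(P^*)=A-BK^*$ is Hurwitz; since $P\mapsto\mathcal{A}(P)$ is continuous and the Hurwitz property is open, there is $\rho>0$ with $\mathcal{A}(P)$ Hurwitz for all $P\in\mathcal{\bar B}_\rho(P^*)$, so the map $P_i\mapsto P_{i+1}$ is well defined there and the Hurwitz claim of the lemma follows once $\delta_0\le\rho$.

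Next I would derive the classical Kleinman error identity. Writing $E_i:=P_i-P^*$, I would substitute $A=\mathcal{A}(P_i)+BK_{i+1}$ into \eqref{ARE}, complete the square in $K_{i+1}$, and subtract the result from the Lyapunov equation above; the $Q$-, linear-, and cross-terms cancel, leaving
\begin{equation*}
\mathcal{A}(P_i)^TE_{i+1}+E_{i+1}\mathcal{A}(P_i)+E_iBR^{-1}B^TE_i=0.
\end{equation*}
Because $\mathcal{A}(P_i)$ is Hurwitz and $E_iBR^{-1}B^TE_i\succeq0$, this gives $E_{i+1}=\int_0^\infty e^{\mathcal{A}(P_i)^Tt}E_iBR^{-1}B^TE_i\,e^{\mathcal{A}(P_i)t}\dd t\succeq0$, which makes the quadratic dependence on $E_i$ explicit.

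To convert this into the stated estimate I would use that conjugation is order-preserving: with $M:=\mathcal{A}(P_i)$ and $W:=E_iBR^{-1}B^TE_i\succeq0$ one has $e^{M^Tt}We^{Mt}\preceq\Vert W\Vert_2\,e^{M^Tt}e^{Mt}$, hence $0\preceq E_{i+1}\preceq\Vert W\Vert_2\,S(P_i)$, where $S(P)$ solves $\mathcal{A}(P)^TS+S\mathcal{A}(P)+I_n=0$; consequently $\Vert E_{i+1}\Vert_F\le\sqrt{n}\,\Vert S(P_i)\Vert_2\Vert BR^{-1}B^T\Vert_2\Vert E_i\Vert_F^2$. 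On the compact set $\mathcal{\bar B}_\rho(P^*)$ the inverse Lyapunov operator is continuous, so $s_0:=\sup_{P\in\mathcal{\bar B}_\rho(P^*)}\Vert S(P)\Vert_2<\infty$; putting $c:=\sqrt{n}\,s_0\Vert BR^{-1}B^T\Vert_2$ yields $\Vert E_{i+1}\Vert_F\le c\Vert E_i\Vert_F^2$ on that ball. Finally, given $\sigma\in(0,1)$, I would take $\delta_0(\sigma):=\min\{\rho,\sigma/c\}$; then for $P_i\in\mathcal{B}_{\delta_0}(P^*)$, $\mathcal{A}(P_i)$ is Hurwitz and $\Vert P_{i+1}-P^*\Vert_F\le c\Vert E_i\Vert_F\cdot\Vert E_i\Vert_F\le c\delta_0\Vert P_i-P^*\Vert_F\le\sigma\Vert P_i-P^*\Vert_F$.

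The one step I expect to need care is the uniform control, over a neighbourhood of $P^*$, of the propagator of $\mathcal{A}(P)$ — equivalently the uniform bound $s_0$ on $S(P)$ — but this is a routine consequence of continuity of the inverse Lyapunov operator on the open set of Hurwitz matrices together with compactness of $\mathcal{\bar B}_\rho(P^*)$; the remaining manipulations (the Newton reformulation and the error identity) are standard algebra. I would also note that the quadratic bound $\Vert E_{i+1}\Vert_F\le c\Vert E_i\Vert_F^2$ is slightly stronger than the lemma and is precisely what makes the contraction factor $\sigma$ arbitrary in $(0,1)$.
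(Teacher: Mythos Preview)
Your proposal is correct and follows essentially the same route as the paper: both establish the Hurwitz property of $\mathcal{A}(P_i)$ by continuity on a closed ball around $P^*$, derive the Kleinman error identity $\mathcal{L}_{\mathcal{A}(P_i)}(P_{i+1}-P^*)=-(P_i-P^*)BR^{-1}B^T(P_i-P^*)$, extract a quadratic bound $\Vert P_{i+1}-P^*\Vert_F\le c\Vert P_i-P^*\Vert_F^2$ via a uniform bound on the inverse Lyapunov operator over that compact neighbourhood, and then take $\delta_0\le\sigma/c$. The only cosmetic difference is in how the constant is obtained: the paper bounds $\Vert\mathcal{L}_{\mathcal{A}(P_i)}^{-1}\Vert$ directly through the Kronecker-sum representation $\Vert\mathcal{P}(\mathcal{A}(P_i))^{-1}\Vert_2$, whereas you exploit the positive-semidefinite structure of the error equation and bound via $S(P_i)=\mathcal{L}_{\mathcal{A}(P_i)}^{-1}(-I_n)$.
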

\begin{remark}
Lemma \ref{lemma_local_exponentail_stable} is inspired by the fact that Procedure \ref{procedure_policy_itration} is quadratically convergent, which was observed by Kleinman in \cite[Remark 3)]{1098829}.
\end{remark}
Now consider the policy iteration in the presence of errors.
\begin{proc}[Robust Policy Iteration]\label{procedure_robust_policy_iteration}
\ \par
 \begin{enumerate}[label=\arabic*)]
        \item Choose a stabilizing control gain $\hat{K}_1$, and let $i=1$.
        \item\label{inexact_PI_PE} (Inexact policy evaluation) Obtain $\hat{G}_i = \tilde{G}_i + \Delta G_i$, where $\Delta G_i\in\mathbb{S}^{m+n}$ is a disturbance,
        $$\tilde{G}_i = \left[\begin{array}{cc}
        Q + A^T\tilde{P}_i + \tilde{P}_iA & \tilde{P}_iB \\
        B^T\tilde{P}_i & R
        \end{array}\right],$$
        and $\tilde{P}_i\in\mathbb{S}^n$ satisfy
        \begin{equation}\label{eRPI_PE}
        \mathcal{H}(\tilde{G}_i,\hat{K}_i)=0.
        \end{equation}
        \item (Policy update) Construct a new control gain
        \begin{equation}\label{eRPI_PI}
            \hat{K}_{i+1} = \hat{G}^{-1}_{i,22}\hat{G}_{i,21}.
        \end{equation}
        \item Set $i\gets i+1$ and go back to Step \ref{inexact_PI_PE}.
    \end{enumerate}
\end{proc}
\begin{remark}
The requirement that $\hat{G}_i\in\mathbb{S}^{m+n}$ is not restrictive, since for any $X\in\mathbb{R}^{(n+m)\times(n+m)}$, $x^TXx = \frac{1}{2}x^T(X+X^T)x$, where $\frac{1}{2}(X+X^T)$ is symmetric.
\end{remark}
Suppose
\begin{equation}\label{start_with_initial_matrix_robust_PI}
    \hat{K}_1 = R^{-1}B^T\tilde{P}_0
\end{equation}
is stabilizing, and $\Delta G_0=0$, where $\tilde{P}_0\in\mathbb{S}^{n}$. If $\hat{K}_i$ is stabilizing and $\hat{G}_{i,22}$ is invertible for all $i\in\mathbb{Z}_+$, $i>0$ (this is possible under certain conditions, see Appendix \ref{proof_local_ISS}), the sequence $\{\tilde{P}_i\}_{i=0}^{\infty}$ generated by Procedure \ref{procedure_robust_policy_iteration} with condition \eqref{start_with_initial_matrix_robust_PI} can be viewed as the state, and $\{\Delta G_i\}_{i=0}^{\infty}$ is the disturbance input. With this in mind, the next theorem shows that the discrete-time system $\tilde{P}_i$ as defined by the Procedure 2 above is locally input-to-state stable when $\Delta G$ is considered as the input; see \cite{JIANG2001857}, \cite[Theorem 2.1.]{JIANG20042129}.
\begin{theorem}\label{theorem_local_ISS}
For $\sigma$ and its associated $\delta_0$ in Lemma \ref{lemma_local_exponentail_stable}, there exists $\delta_1(\delta_0)>0$, such that if $\Vert\Delta G\Vert_\infty<\delta_1$, $\tilde{P}_0\in\mathcal{B}_{\delta_0}(P^*)$,
\begin{enumerate}[label=(\roman*)]
    \item\label{theorem_local_ISS_item_well_defined} $\hat{G}_{i,22}$ is invertible and $\hat{K}_i$ is stabilizing, $\forall i\in\mathbb{Z}_+$, $i>0$;
    \item\label{theorem_local_ISS_item_local_ISS} the following local input-to-state stability holds:
    $$\Vert \tilde{P}_i - P^*\Vert_F\leq \beta(\Vert\tilde{P}_0-P^*\Vert_F,i) + \gamma(\Vert \Delta G\Vert_\infty),$$
    where
    $$\beta(y,i) = \sigma^iy,\qquad \gamma(y) = \frac{c_3}{1-\sigma}y,\qquad y\in\mathbb{R},$$
    and $c_3(\delta_0)>0$.
    \item\label{theorem_local_ISS__item_converging_input_converging_state} $\lim_{i\rightarrow\infty} \Vert\Delta G_i\Vert_F = 0$ implies $\lim_{i\rightarrow\infty} \Vert \tilde{P}_i-P^* \Vert_F=0$.
\end{enumerate}
\end{theorem}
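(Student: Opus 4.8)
The plan is to regard one iteration of Procedure \ref{procedure_robust_policy_iteration} as a small, Lipschitz perturbation of one iteration of Procedure \ref{procedure_policy_itration}, and then to propagate the local contraction of Lemma \ref{lemma_local_exponentail_stable} through a forward induction that simultaneously yields invariance of a ball around $P^*$, the well-posedness claim \ref{theorem_local_ISS_item_well_defined}, and the ISS estimate \ref{theorem_local_ISS_item_local_ISS}. It is convenient to argue on a closed ball $\mathcal{\bar{B}}_{\delta_0}(P^*)$ and, if necessary, to shrink $\delta_0$: a smaller radius still satisfies the conclusion of Lemma \ref{lemma_local_exponentail_stable} for the same $\sigma$, so nothing is lost.

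The core estimate I would establish is a one-step bound: there are constants $c_3>0$ and $\rho>0$, depending only on $\delta_0,A,B,Q,R,P^*$, such that whenever $\tilde P_i\in\mathcal{\bar{B}}_{\delta_0}(P^*)$ and $\Vert\Delta G_i\Vert_F<\rho$, the matrix $\hat G_{i,22}=R+\Delta G_{i,22}$ is invertible, $\hat K_{i+1}=\hat G_{i,22}^{-1}\hat G_{i,21}$ is stabilizing, $\tilde P_{i+1}$ is well defined, and
\[
\Vert\tilde P_{i+1}-P^*\Vert_F\le\sigma\Vert\tilde P_i-P^*\Vert_F+c_3\Vert\Delta G_i\Vert_F .
\]
To get this I would split the perturbed step into the nominal step plus two Lipschitz corrections. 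Since $R\succ 0$, for $\Vert\Delta G_i\Vert_F$ small $\hat G_{i,22}$ is invertible with uniformly bounded inverse, so the resolvent identity gives $\hat K_{i+1}=R^{-1}B^T\tilde P_i+E_i$ with $\Vert E_i\Vert_F\le L_1\Vert\Delta G_i\Vert_F$; here $\bar K_{i+1}:=R^{-1}B^T\tilde P_i$ is precisely the improved gain that Procedure \ref{procedure_policy_itration} produces from $\tilde P_i$, and by Lemma \ref{lemma_local_exponentail_stable} the closed-loop matrix $A-B\bar K_{i+1}=\mathcal{A}(\tilde P_i)$ is Hurwitz, with decay margin bounded away from zero uniformly over the compact set $\mathcal{\bar{B}}_{\delta_0}(P^*)$; hence $A-B\hat K_{i+1}$ stays Hurwitz for $\Vert\Delta G_i\Vert_F<\rho$, which gives item \ref{theorem_local_ISS_item_well_defined} at that step. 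Then $\tilde P_{i+1}$ and the nominal iterate $\bar P_{i+1}$ (the image of $\tilde P_i$ under one step of Procedure \ref{procedure_policy_itration}) solve the Lyapunov equations $(A-B\hat K_{i+1})^T\tilde P_{i+1}+\tilde P_{i+1}(A-B\hat K_{i+1})+Q+\hat K_{i+1}^TR\hat K_{i+1}=0$ and the analogous one with $\bar K_{i+1}$; since the Lyapunov solution map is Lipschitz in its coefficients on any family of gains with a uniform closed-loop decay margin, $\Vert\tilde P_{i+1}-\bar P_{i+1}\Vert_F\le L_2\Vert E_i\Vert_F$. Combining this with $\Vert\bar P_{i+1}-P^*\Vert_F\le\sigma\Vert\tilde P_i-P^*\Vert_F$ from Lemma \ref{lemma_local_exponentail_stable} and putting $c_3:=L_1L_2$ yields the displayed inequality.

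Given the one-step bound, I would set $\delta_1:=\min\{\rho,\,(1-\sigma)\delta_0/c_3\}$. Because $\Delta G_0=0$ and $\hat K_1=R^{-1}B^T\tilde P_0$, the first step is nominal, so $\tilde P_1=\bar P_1$ and $\Vert\tilde P_1-P^*\Vert_F\le\sigma\Vert\tilde P_0-P^*\Vert_F<\delta_0$; and whenever $\tilde P_i\in\mathcal{B}_{\delta_0}(P^*)$ the one-step bound applies (as $\Vert\Delta G_i\Vert_F\le\Vert\Delta G\Vert_\infty<\delta_1\le\rho$), so $\hat G_{i,22}$ is invertible, $\hat K_{i+1}$ is stabilizing, and $\Vert\tilde P_{i+1}-P^*\Vert_F<\sigma\delta_0+c_3\delta_1\le\delta_0$. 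By induction this proves \ref{theorem_local_ISS_item_well_defined} and keeps the whole trajectory in $\mathcal{B}_{\delta_0}(P^*)$, so the recursion $a_{i+1}\le\sigma a_i+c_3\Vert\Delta G_i\Vert_F$, with $a_i:=\Vert\tilde P_i-P^*\Vert_F$, holds for all $i$; unrolling it gives $a_i\le\sigma^i a_0+c_3\sum_{j=0}^{i-1}\sigma^{i-1-j}\Vert\Delta G_j\Vert_F\le\sigma^i a_0+\frac{c_3}{1-\sigma}\Vert\Delta G\Vert_\infty$, which is \ref{theorem_local_ISS_item_local_ISS}. For \ref{theorem_local_ISS__item_converging_input_converging_state}, with the additional hypothesis $\Vert\Delta G_j\Vert_F\to0$, given $\epsilon>0$ I would pick $N$ so that $\Vert\Delta G_j\Vert_F<\epsilon$ for $j\ge N$ and split the convolution sum at $N$: the head is $O(\sigma^{\,i-N})\to0$ and the tail is at most $\epsilon/(1-\sigma)$, hence $\limsup_i a_i\le c_3\epsilon/(1-\sigma)$ for every $\epsilon>0$, i.e. $a_i\to0$.

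The step I expect to be the main obstacle is the one-step bound, and within it the \emph{uniform} Lipschitz estimate for the policy-evaluation map: one needs a bound on the Lyapunov solution operator $X\mapsto\int_0^\infty e^{(A-B\hat K)^T t}\,X\,e^{(A-B\hat K)t}\,\dd t$ valid with a single constant for every gain $\hat K=R^{-1}B^TP+E$ with $P\in\mathcal{\bar{B}}_{\delta_0}(P^*)$ and $\Vert E\Vert_F$ small. Compactness of the closure of this family of gains, together with continuity of the spectral abscissa and of the map taking a Hurwitz matrix to the above integral, supplies such a bound; the care required is to pick $\delta_0$ small enough that all the local Lipschitz constants are simultaneously valid while it remains an admissible radius in Lemma \ref{lemma_local_exponentail_stable}, and to check that none of $c_3,L_1,L_2,\rho$ depends on $i$, so that the induction genuinely closes.
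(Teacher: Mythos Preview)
Your proposal is correct and follows essentially the same approach as the paper: both establish a uniform one-step contraction-plus-error bound $\Vert\tilde P_{i+1}-P^*\Vert_F\le\sigma\Vert\tilde P_i-P^*\Vert_F+c_3\Vert\Delta G_i\Vert_F$ over the ball $\mathcal{B}_{\delta_0}(P^*)$ (the paper packages the well-posedness and Lipschitz error estimate into two auxiliary lemmas using compactness and Lyapunov perturbation bounds, exactly as you outline), choose $\delta_1$ so that $\sigma\delta_0+c_3\delta_1\le\delta_0$ to close the induction and prove \ref{theorem_local_ISS_item_well_defined}, unroll the geometric recursion for \ref{theorem_local_ISS_item_local_ISS}, and use the standard tail-splitting/restart argument for \ref{theorem_local_ISS__item_converging_input_converging_state}.
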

\begin{remark}
Since $\sigma<1$, $\beta\in\mathcal{K}\mathcal{L}$ and $\gamma(\cdot)\in\mathcal{K}$.
\end{remark}
Intuitively, Theorem \ref{theorem_local_ISS} implies that in Procedure \ref{procedure_robust_policy_iteration}, if $\tilde{P}_0$ is near $P^*$ (thus $\hat{K}_1$ is near $K^*$), and the disturbance input $\Delta G$ is bounded and not too large, then the cost of the generated control policy $\hat{K}_i$ is also bounded, and will ultimately be no larger than a constant proportional to the $l^\infty$-norm of the disturbance. The smaller the disturbance is, the better the ultimately generated policy is. In other words, the algorithm described in Procedure \ref{procedure_robust_policy_iteration} is not sensitive to small disturbances when the initial condition is in a neighbourhood of the optimal solution. Based on Theorem \ref{theorem_local_ISS}, the following corollary is derived. 
\begin{corollary}\label{corollary_global}
For any given stabilizing control gain $\hat{K}_1$ and any $\epsilon>0$, if $Q>0$, there exists $\delta_2(\epsilon,\hat{K}_1)>0$, $M_0(\delta_2)>0$, such that as long as
$\Vert \Delta G \Vert_\infty < \delta_2$, $\hat{G}_{i,22}$ is invertible, $\hat{K}_i$ is stabilizing, $\Vert \tilde{P}_i\Vert_F<M_0$, $\forall i\in\mathbb{Z}_+$, $i>0$ and
$$\limsup_{i\rightarrow\infty} \Vert \tilde{P}_i-P^* \Vert_F<\epsilon.$$
If in addition $\lim_{i\rightarrow\infty} \Vert\Delta G_i\Vert_F = 0$, then $\lim_{i\rightarrow\infty} \Vert \tilde{P}_i-P^* \Vert_F=0$.
\end{corollary}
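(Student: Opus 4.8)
The plan is to reduce the global statement to the local one (Theorem~\ref{theorem_local_ISS}) by first running an \emph{error-free} instance of Kleinman's iteration long enough to land near $P^*$, then showing that a sufficiently small disturbance keeps the perturbed iterates close to that error-free trajectory over this finite horizon, and finally invoking the local contraction estimate from there on. Fix $\sigma=\tfrac12$ and let $\delta_0=\delta_0(\sigma)$ be as in Lemma~\ref{lemma_local_exponentail_stable}; since that lemma is unaffected by shrinking $\delta_0$, assume $\delta_0\le\epsilon$. Let $\delta_1$ and $c_3$ be as in Theorem~\ref{theorem_local_ISS}, and recall the core one-step bound behind it: if $\tilde P_i\in\mathcal{B}_{\delta_0}(P^*)$ and $\Vert\Delta G_i\Vert_F<\delta_1$, then $\hat G_{i,22}$ is invertible, $\hat K_{i+1}$ is stabilizing, and $\Vert\tilde P_{i+1}-P^*\Vert_F\le\sigma\Vert\tilde P_i-P^*\Vert_F+c_3\Vert\Delta G_i\Vert_F$; this follows from Lemma~\ref{lemma_local_exponentail_stable} together with the local Lipschitz dependence of the one-step map of Procedure~\ref{procedure_robust_policy_iteration} on $\Delta G_i$, uniformly over $\bar{\mathcal{B}}_{\delta_0}(P^*)$.

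First I would apply Procedure~\ref{procedure_policy_itration} with $K_1=\hat K_1$ and \emph{no} disturbance, obtaining a sequence $\{P_i^{\circ}\}$ with $P_i^{\circ}\to P^*$ by Theorem~\ref{theorem_standard_PI}; fix a finite $N=N(\hat K_1,\epsilon)$ with $P_N^{\circ}\in\mathcal{B}_{\delta_0/4}(P^*)$. Next comes the finite-horizon ``shadowing'' step: the one-step map of Procedure~\ref{procedure_robust_policy_iteration}, $\tilde P_i\mapsto\hat K_{i+1}=(R+\Delta G_{i,22})^{-1}(B^T\tilde P_i+\Delta G_{i,21})\mapsto\tilde P_{i+1}$ (the unique solution of a Lyapunov equation), is jointly continuous in $(\tilde P_i,\Delta G_i)$ near each point $(P_i^{\circ},0)$, because the stabilizing gains form an open set, $R\succ0$, and the Lyapunov solution operator is analytic in the closed-loop matrix. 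Since $\{P_i^{\circ}\}_{i=1}^{N}$ is finite, a straightforward induction on $i\le N$ produces $\delta_2'(\hat K_1,\epsilon)>0$ such that $\Vert\Delta G\Vert_\infty<\delta_2'$ forces $\hat K_i$ stabilizing, $\hat G_{i,22}$ invertible and $\Vert\tilde P_i-P_i^{\circ}\Vert_F<\delta_0/4$ for all $1\le i\le N$; in particular $\tilde P_N\in\mathcal{B}_{\delta_0/2}(P^*)$. The hypothesis $Q>0$ enters precisely here, to furnish the uniform a priori bounds needed for this induction and for $M_0$ (cf.\ \cite{hylla2011extension}).

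Now set $\delta_2=\min\{\delta_2',\ \delta_1,\ (1-\sigma)\delta_0/(4c_3)\}$ and assume $\Vert\Delta G\Vert_\infty<\delta_2$. For $i\ge N$, an induction using the one-step bound shows $\tilde P_i\in\mathcal{B}_{\delta_0}(P^*)$ throughout (hence also $\hat K_{i+1}$ stabilizing and $\hat G_{i+1,22}$ invertible): from $\tilde P_i\in\mathcal{B}_{\delta_0}(P^*)$ we get $\Vert\tilde P_{i+1}-P^*\Vert_F\le\sigma\delta_0+c_3\delta_2<\delta_0$, and the base case is $\tilde P_N\in\mathcal{B}_{\delta_0/2}(P^*)$. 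Unrolling the bound from $i=N$ yields $\Vert\tilde P_i-P^*\Vert_F\le\sigma^{i-N}\Vert\tilde P_N-P^*\Vert_F+\tfrac{c_3}{1-\sigma}\Vert\Delta G\Vert_\infty$, so $\limsup_{i\to\infty}\Vert\tilde P_i-P^*\Vert_F\le\tfrac{c_3}{1-\sigma}\delta_2\le\delta_0/4<\epsilon$. For the boundedness claim take $M_0=\max_{1\le i\le N}\Vert P_i^{\circ}\Vert_F+\Vert P^*\Vert_F+\delta_0$, which dominates $\Vert\tilde P_i\Vert_F$ on $1\le i\le N$ (by shadowing) and on $i>N$ (by membership in $\mathcal{B}_{\delta_0}(P^*)$). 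Finally, if $\Vert\Delta G_i\Vert_F\to0$, applying Theorem~\ref{theorem_local_ISS}\ref{theorem_local_ISS__item_converging_input_converging_state} with time origin shifted to $N$ (legitimate, since from index $N$ onward the iteration is an honest trajectory of Procedure~\ref{procedure_robust_policy_iteration} started inside $\mathcal{B}_{\delta_0}(P^*)$) gives $\Vert\tilde P_i-P^*\Vert_F\to0$.

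I expect the finite-horizon shadowing step to be the crux: one must show that, over the first $N$ iterations, a uniformly small disturbance keeps every generated gain $\hat K_i$ inside the open set of stabilizing gains and every $\hat G_{i,22}$ invertible, so that the perturbed trajectory stays within $\delta_0/4$ of the finite error-free orbit $\{P_i^{\circ}\}_{i\le N}$. Everything downstream is the geometric contraction already packaged in Lemma~\ref{lemma_local_exponentail_stable} and Theorem~\ref{theorem_local_ISS}; the only genuinely new ingredients are continuity of the Lyapunov solution map on the stabilizing set and the $Q>0$-based uniform bounds.
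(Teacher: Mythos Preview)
Your proposal is correct and follows the same high-level strategy as the paper: show that the perturbed iteration enters $\mathcal{B}_{\delta_0}(P^*)$ after finitely many steps (the paper's Lemma~\ref{lemma_converge_to_neighbourhood}), then invoke Theorem~\ref{theorem_local_ISS}. The two proofs differ in how the finite-horizon ``reaching'' step is carried out. You fix the single error-free orbit $\{P_i^\circ\}_{i\le N}$ from $\hat K_1$ and use joint continuity of the one-step map at the finitely many points $(P_i^\circ,0)$, obtaining the shadowing bound by a finite induction. The paper instead uses $Q>0$ to prove uniform a~priori bounds on \emph{all} perturbed iterates (Lemmas~\ref{lemma_stability} and~\ref{lemma_boundedness} give $\tilde P_i\le 6\tilde P_1$ and $\|\hat K_i\|_F\le C_0$), works on the compact set $\mathcal M$ of all possible perturbed $\tilde P_i$, rewrites the iteration as a perturbed Newton step~\eqref{perturbed_newton}, proves a Lipschitz estimate~\eqref{Newton_Lipschitz} for the Newton map $\mathcal N$ on $\mathcal M$, and then compares perturbed and error-free trajectories via a discrete Gronwall inequality~\eqref{continuous_dependence}. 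Your route is more elementary and---contrary to your own remark---does not actually need $Q>0$ for the induction (continuity at each $(P_i^\circ,0)$ only uses that $A-BK_{i+1}^\circ$ is Hurwitz and $R\succ0$). What the paper's heavier machinery buys is explicit, trajectory-independent constants (e.g.\ $M_0=6\|\tilde P_1\|_F$) and uniformity over $\mathcal M$ rather than a single orbit; this uniform boundedness is later exploited verbatim in the proof of Theorem~\ref{theorem_data_driven_robut_PI}.
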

\begin{remark}
In Corollary \ref{corollary_global}, $\hat{K}_1$ can be any stabilizing control gain, which is different from that of Theorem \ref{theorem_local_ISS}. When there is no disturbance, Corollary \ref{corollary_global} implies the convergence result of Procedure \ref{procedure_policy_itration} in \cite[Theorem]{1098829} (i.e. Theorem \ref{theorem_standard_PI} in this paper).
\end{remark}
\section{Data-driven Policy Iteration with Unkonwn Bounded Disturbances}\label{section_datadriven_PI}
In this section, we use Corollary \ref{corollary_global} to show that the Algorithm 1 in \cite{JIANG20122699}, a data-driven policy iteration method for the LQR problem, is robust to noisy data caused by small unknown additive external disturbances in the system dynamics.

Assume $A$, $B$ are unknown, and $Q$, $R$ are given. For a trajectory $\{x(t),u(t)\}$ of system \eqref{LTI_sys} on interval $[0,M\Delta t]$, where $M\in\mathbb{Z}_+$, $\Delta t\in\mathbb{R}_+$, consider the following equation 
\begin{equation}\label{ADP_LTI}
    \left[\begin{array}{cc}
         \svec(P_i)  \\
         \vect(K_{i+1}) 
    \end{array}\right] = \Theta(K_i)^\dagger\Xi(K_i),
\end{equation}
where
\begin{equation*}
    \begin{split}
        \Theta(K_i) &= \left[\delta_{xx}, -2I_{xx}(I_n\otimes K_i^TR)-2I_{xu}(I_n\otimes R)\right],  \\
        \Xi(K_i) &= -I_{xx}\vect(Q + K_i^TRK_i),   \\
        \delta_{xx} &= \left[\mathcal{U}_0, \mathcal{U}_1, \cdots, \mathcal{U}_{M-1}\right]^T, \ \mathcal{U}_j = \tilde{x}(t_{j+1})-\tilde{x}(t_j),  \\
        I_{xx} &= \left[\mathcal{V}_0, \mathcal{V}_1, \cdots, \mathcal{V}_{M-1}\right]^T,\ \mathcal{V}_j = \int_{t_j}^{t_{j+1}} x\otimes x \dd{\tau}, \\
        I_{xu} &= \left[\mathcal{W}_0, \mathcal{W}_1, \cdots, \mathcal{W}_{M-1}\right]^T,\ \mathcal{W}_j = \int_{t_j}^{t_{j+1}} x\otimes u \dd{\tau},
    \end{split}
\end{equation*}
and $t_j=\Delta t j$, $j=0,\cdots,M$. Note that only input/state data is involved in the matrices $\Theta(K_i)$ and $\Xi(K_i)$. It has been shown in \cite[Lemma 6, Theorem 7]{JIANG20122699} that, if
\begin{equation}\label{ADP_LTI_assum}
    \rank([I_{xx},I_{xu}]) = n_0 \triangleq \frac{n(n+1)}{2} + mn,
\end{equation}
then starting with a stabilizing initial control gain $K_1$, we can obtain the sequences of $\{P_i\}_{i=1}^\infty$ and $\{K_i\}_{i=1}^\infty$ in Procedure \ref{procedure_policy_itration} by iterating equation (\ref{ADP_LTI}), instead of iterating between Step \ref{standard_PI_PE} and Step \ref{standard_PI_PI}.

Now suppose system \eqref{LTI_sys} is corrupted by external disturbance,
\begin{equation}\label{LTI_sys_noise}
    \dot{\hat{x}} = A\hat{x} + B\hat{u} + w,\quad \hat{x}(0) = \hat{x}_0,
\end{equation}
where $w\in\mathbb{R}^n$ is the disturbance and unknown. Replacing $x$ and $u$ in \eqref{ADP_LTI} by $\hat{x}$ and $\hat{u}$ yields
\begin{equation}\label{ADP_LTI_noise}
    \left[\begin{array}{cc}
         \svec(\hat{P}_i)  \\
         \vect(\hat{K}_{i+1}) 
    \end{array}\right] = \hat{\Theta}(\hat{K_i})^\dagger\hat{\Xi}(\hat{K}_i),
\end{equation}
where
\begin{equation*}
    \begin{split}
        \hat{\Theta}(\hat{K}_i) &= \left[\delta_{\hat{x}\hat{x}}, -2I_{\hat{x}\hat{x}}(I_n\otimes \hat{K}_i^TR)-2I_{\hat{x}\hat{u}}(I_n\otimes R)\right],  \\
        \hat{\Xi}(\hat{K}_i) &= -I_{\hat{x}\hat{x}}\vect(Q + \hat{K}_i^TR\hat{K}_i).
    \end{split}
\end{equation*}
Using Corollary \ref{corollary_global}, we can show that approximate optimal control gain can still be obtained by iterating \eqref{ADP_LTI_noise}.
\begin{theorem}\label{theorem_data_driven_robut_PI}
Given $M\in\mathbb{Z}_+$, $\Delta t\in\mathbb{R}_+$, starting from any stabilizing control gain $\hat{K}_1$, for any $\epsilon>0$, there exists a $c(\epsilon, \hat{K}_1)>0$, such that if $\rank([I_{\hat{x}\hat{x}},I_{\hat{x}\hat{u}}]) = n_0$ for any $\Vert w\Vert_\infty<c$, then $\hat{K}_i$ given by iterating \eqref{ADP_LTI_noise} is stabilizing for all $i\in\mathbb{Z}_+$, $i>0$ and $\limsup_{i\rightarrow\infty} \Vert \tilde{P}_i-P^* \Vert_F<\epsilon.$
\end{theorem}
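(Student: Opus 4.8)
The plan is to recognize the noisy data-driven recursion \eqref{ADP_LTI_noise} as an instance of Procedure \ref{procedure_robust_policy_iteration} driven by a disturbance sequence $\{\Delta G_i\}$ of size $O(\Vert w\Vert_\infty)$, and then to conclude with Corollary \ref{corollary_global}. The first ingredient is a perturbation bound on the data. Since the solution of \eqref{LTI_sys_noise} on the compact interval $[0,M\Delta t]$ depends Lipschitz-continuously on $w$ (Gronwall's inequality), the noisy signals $\hat x,\hat u$, and hence the data matrices $\delta_{\hat x\hat x},I_{\hat x\hat x},I_{\hat x\hat u}$, are within $O(\Vert w\Vert_\infty)$ in Frobenius norm of the noise-free data $\delta_{xx},I_{xx},I_{xu}$ generated with $w=0$; by hypothesis this noise-free data still satisfies the rank condition \eqref{ADP_LTI_assum}.

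Next, the reformulation of one iteration. Fix any stabilizing gain $\hat K$ and let $\tilde P$ be the unique solution of the policy-evaluation equation $\mathcal{H}(\tilde G,\hat K)=0$, with $\tilde G$ built from $\tilde P$ as in Procedure \ref{procedure_robust_policy_iteration}. By \cite[Lemma 6, Theorem 7]{JIANG20122699}, which identifies $\Theta(K)^\dagger\Xi(K)$ with $[\svec(P)^T,\vect(R^{-1}B^TP)^T]^T$ (the policy-evaluation solution $P$ for $K$ together with its policy improvement) for every stabilizing $K$, we have $\Theta(\hat K)^\dagger\Xi(\hat K)=[\svec(\tilde P)^T,\vect(R^{-1}B^T\tilde P)^T]^T$. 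The rank condition \eqref{ADP_LTI_assum} forces $\Theta(\cdot)$ to have full column rank at every gain (the argument of \cite{JIANG20122699}), so $\sigma_{\min}(\Theta(\hat K))$ is bounded away from $0$ on any bounded set of gains and the Moore--Penrose inverse is Lipschitz in its argument there; combined with the data bound this gives $\hat\Theta(\hat K)^\dagger\hat\Xi(\hat K)=[\svec(\tilde P+\Delta P)^T,\vect(R^{-1}B^T\tilde P+\Delta K)^T]^T$ with $\Vert\Delta P\Vert_F+\Vert\Delta K\Vert_F\le\psi(\Vert\hat K\Vert_2)\Vert w\Vert_\infty$ for a continuous increasing $\psi$ and all $\Vert w\Vert_\infty$ small relative to $\Vert\hat K\Vert_2$. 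Letting $\Delta G\in\mathbb{S}^{m+n}$ have vanishing diagonal blocks and $(1,2),(2,1)$ blocks equal to $(R\Delta K)^T$ and $R\Delta K$ (in the partition of $\hat G_i$ in Procedure \ref{procedure_robust_policy_iteration}), and $\hat G:=\tilde G+\Delta G$, one checks $\hat G_{22}=R$ is invertible and $\hat G_{22}^{-1}\hat G_{21}=R^{-1}B^T\tilde P+\Delta K$: thus one step of \eqref{ADP_LTI_noise} out of $\hat K$ equals one step of Procedure \ref{procedure_robust_policy_iteration} out of $\hat K$ with disturbance $\Delta G$, and $\Vert\Delta G\Vert_F\le\sqrt{2}\,\Vert R\Vert_2\,\psi(\Vert\hat K\Vert_2)\Vert w\Vert_\infty$.

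Finally, the recursion. Apply Corollary \ref{corollary_global} with the given stabilizing $\hat K_1$ and $\epsilon$ to obtain $\delta_2>0$, $M_0>0$; since its conclusion bounds $\Vert\tilde P_i\Vert_F<M_0$ whenever $\Vert\Delta G\Vert_\infty<\delta_2$, and $\hat K_{i+1}=R^{-1}B^T\tilde P_i+\Delta K_i$, set $\bar K:=\Vert R^{-1}\Vert_2\Vert B\Vert_2 M_0+\Vert\hat K_1\Vert_2+1$ and choose $c:=\delta_2/(\sqrt{2}\,\Vert R\Vert_2\,\psi(\bar K))$, further shrunk so the smallness clause of the previous paragraph holds for all $\Vert\hat K\Vert_2\le\bar K$. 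An induction on $i$, using at each step the single-step equivalence above together with the one-step estimates behind Corollary \ref{corollary_global}, then propagates simultaneously that $\hat K_i$ is stabilizing, $\Vert\hat K_i\Vert_2\le\bar K$, $\Vert\tilde P_i\Vert_F<M_0$, and $\Vert\Delta G_i\Vert_F<\delta_2$ for all $i$; hence the disturbance sequence actually produced by \eqref{ADP_LTI_noise} obeys $\Vert\Delta G\Vert_\infty<\delta_2$, and Corollary \ref{corollary_global} delivers both the stabilizing property of every $\hat K_i$ and $\limsup_{i\rightarrow\infty}\Vert\tilde P_i-P^*\Vert_F<\epsilon$.

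I expect the main obstacle to be precisely this last, bootstrapping step. The disturbance $\Delta G_i$ entering Procedure \ref{procedure_robust_policy_iteration} is not exogenous -- it is generated by the iteration itself -- while the estimate $\Vert\Delta G_i\Vert_F\le\sqrt2\,\Vert R\Vert_2\,\psi(\Vert\hat K_i\Vert_2)\Vert w\Vert_\infty$ is only uniform on bounded sets of gains, and the bound on $\Vert\hat K_i\Vert_2$ hinges on the boundedness assertion $\Vert\tilde P_i\Vert_F<M_0$ of Corollary \ref{corollary_global}, which itself presupposes $\Vert\Delta G\Vert_\infty<\delta_2$. Closing this loop forces one to interleave the induction above with the induction proving Corollary \ref{corollary_global} rather than to invoke Corollary \ref{corollary_global} as a black box. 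A subsidiary point is to confirm that the Lipschitz modulus of $\Theta(\hat K)^\dagger$ is uniform over $\Vert\hat K\Vert_2\le\bar K$ (equivalently, that $\sigma_{\min}(\Theta(\hat K))$ is uniformly positive there), which again rests on \eqref{ADP_LTI_assum} holding at every gain.
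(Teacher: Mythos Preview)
Your proposal is correct and follows essentially the same route as the paper: cast one step of \eqref{ADP_LTI_noise} as one step of Procedure \ref{procedure_robust_policy_iteration} with a disturbance $\Delta G_i$ whose size is controlled by $\Vert w\Vert_\infty$ uniformly over a compact set of gains (via a least-squares perturbation bound on $\hat\Theta(\hat K)^\dagger\hat\Xi(\hat K)-\Theta(\hat K)^\dagger\Xi(\hat K)$), and then appeal to Corollary \ref{corollary_global}. The paper encodes $\Delta G_i$ slightly differently (it also perturbs the $(1,1)$ block using $\hat P_i$, whereas you perturb only the off-diagonal blocks), but since the policy-update step \eqref{eRPI_PI} reads only $\hat G_{i,21}$ and $\hat G_{i,22}$, both encodings yield the same recursion and the same norm bound up to constants; the paper also restricts to a compact set $\mathcal{F}$ of stabilizing gains with bounded $P_K$, which is exactly what makes your $\psi(\cdot)$ uniform, and it closes the same bootstrap you flag by using the causality of Procedure \ref{procedure_robust_policy_iteration} (the step-$i$ conclusion of Corollary \ref{corollary_global} depends only on $\Delta G_1,\dots,\Delta G_{i-1}$), so no genuine ``interleaving'' beyond a standard induction is needed.
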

\begin{remark}
The rank condition in Theorem \ref{theorem_data_driven_robut_PI} is in the spirit of persistent excitation in adaptive control \cite{mareels2012adaptive}, which is common in the adaptive dynamic programming literature (see \cite{jiang2017robust,Frank-book1,doi:10.1137/18M1214147,pang2019adaptive,Pang2020policy,8619347,JIANG20122699}). If condition \eqref{ADP_LTI_assum} is satisfied, then by continuity, there always exists a $c>0$ small enough such that the rank condition in Theorem \ref{theorem_data_driven_robut_PI} holds. Usually sum of sinusoidal signals with different frequencies can be added to the input to satisfy condition \eqref{ADP_LTI_assum}, see \cite{mareels2012adaptive,JIANG20122699}.
\end{remark}
\section{Numerical Example}\label{section_simulation}
In this section, we apply \eqref{ADP_LTI_noise} to a disturbed model of constant level continuous stirred tank reactor considered in \cite[Example 2]{doi:10.1080/002071797223343}
$$\dot{x} = \left[\begin{array}{cc}
    -21 & -20 \\
    9 & 8
\end{array}\right]x + u + \xi\sum_{j=1}^{50} sin(\eta_j t),$$
where $\{\eta_j\}_{j=1}^{50}$ is a set of frequencies uniformly sampled from interval $[-100,100]$, and $\xi\in\mathbb{R}_+$ controls the scale of the disturbance. In the simulation, we set $M=140$, $\Delta t = 0.1$, and $u = 0.2\sum_{j=1}^{100} sin(\psi_j t),$
where $\{\psi_j\}_{j=1}^{100}$ is a set of frequencies uniformly sampled from interval $[-500,500]$. \eqref{ADP_LTI_noise} is applied for different choices of $\hat{K}_1$ and $\xi$. The results are summarized in Fig. \ref{comparison}. Comparing Fig.\ref{comparison}-(a) with Fig.\ref{comparison}-(b) and Fig.\ref{comparison}-(c) with Fig.\ref{comparison}-(d) implies that policy iteration is robust to small bounded unknown disturbances, no matter whether the initial control gain is near the optimal gain or not. Comparing Fig.\ref{comparison}-(a) with Fig.\ref{comparison}-(c) and Fig.\ref{comparison}-(b) with Fig.\ref{comparison}-(d) implies that the degree of suboptimality is proportional to the scale of the disturbance. These observations are consistent with the theories we developed in this paper.
\begin{figure}[!htb]
    \centering
    \includegraphics[width=\linewidth]{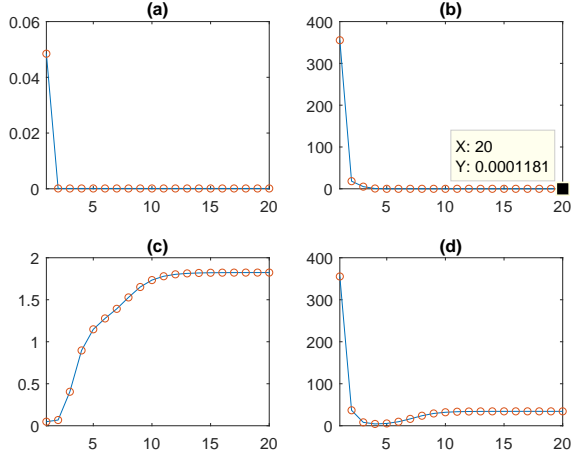}
    \caption{$\Vert \tilde{P}_i-P^*\Vert_F$ (y-axis) versus iteration index (x-axis). (a) and (c) share the same $\hat{K}_1$, which is near to $K^*$. (b) and (d) share the same $\hat{K}_1$, which is far away from $K^*$. (a) and (b) share the same $\xi = 0.01$. (c) and (d) share the same $\xi = 0.5$. All of the $\hat{K}_i$ generated in these experiments are stabilizing.}\label{comparison}
\end{figure}

\section{Conclusion}\label{section_conclusion}
This paper has proposed a robust policy iteration method for the continuous-time LQR problem in the presence of errors at the performance evaluation and performance improvement steps. Some robustness results in the sense of Sontag's input-to-state stability are given. More precisely, it is shown that starting from any stabilizing control policy, the solutions generated by policy iteration are guaranteed close to the optimal solution, as long as the errors are bounded and small. In addition, when the system dynamics are not exactly known, a novel off-policy data-driven policy iteration method is proposed with guaranteed robustness to noisy data collected from the system trajectories in the presence of unknown bounded disturbances.






\appendix
For $X\in\mathbb{R}^{n\times n}$, $Y\in\mathbb{R}^{n\times n}$, define Lyapunov operator
$$\mathcal{L}_{X}(Y)=X^TY+YX.$$
Some useful properties of $\mathcal{L}_X(\cdot)$ are provided below.
\begin{lemma}\label{Lyapunov_properties}
Suppose $X\in\mathbb{R}^{n\times n}$ is Hurwitz.
\begin{enumerate}[label=(\roman*)]
    \item\label{Lyapunov_inverse} 
    The inverse operator $\mathcal{L}_X^{-1}(\cdot)$ exists on $\mathbb{R}^{n\times n}$, and
    $$\Vert \mathcal{L}_X^{-1}\Vert = \Vert \mathcal{P}(X)^{-1}\Vert_2,$$ where
    $$\Vert \mathcal{L}_{X}\Vert = \sup_{Z\in\mathbb{R}^{n\times n},Z\neq 0}\frac{\Vert\mathcal{L}_{X}(Z)\Vert_F}{\Vert Z\Vert_F},$$
    and
    $$\mathcal{P}(X) = I_n\otimes X^T + X^T\otimes I_n.$$
    \item\label{Lyapunov_solution} For each $Z\in\mathbb{R}^{n\times n}$, if $Y$ is the solution of equation $\mathcal{L}_X(Y)=-Z$, then
    $$Y=\int_0^{\infty} e^{X^Tt}Ze^{Xt}\dd t.$$
    \item\label{Lyapunov_perturbation} If in addition to the condition in \ref{Lyapunov_solution}, $\mathcal{L}_{X+\Delta X}(Y+\Delta Y) = -(Z+\Delta Z)$, for some $\Delta X, \Delta Y, \Delta Z\in\mathbb{R}^{n\times n}$, then
    $$\Vert \Delta Y\Vert_2 \leq \left(\Vert\Delta Z\Vert_2 + 2\Vert \Delta X\Vert_2\Vert Y+\Delta Y\Vert_2\right) \Vert H\Vert_2,$$
    where $H = \mathcal{L}^{-1}_X(-I_n)$.
    \item\label{lyapunov_perturbation_small} If in addition to the conditions in \ref{Lyapunov_perturbation}, $Z$ is nonzero. Let $\gamma>0$ be such that
    \begin{equation*}
    \begin{split}
        \Vert \Delta X\Vert_F \leq \gamma \Vert X\Vert_F,\quad
        \Vert \Delta Z\Vert_F &\leq \gamma \Vert Z\Vert_F \\
        \gamma\Vert X\Vert_F \Vert \mathcal{P}(X)^{-1}\Vert_2 &\leq 1/4
    \end{split}
    \end{equation*}
    then
    $$\Vert \Delta Y\Vert_F\leq 8\gamma\Vert X\Vert_F\Vert \mathcal{P}(X)^{-1}\Vert_2\Vert Y\Vert_F.$$
    \item\label{Lyapunov_monotone} $\mathcal{L}_{X}(Y_1)\leq \mathcal{L}_{X}(Y_2) \implies Y_1\geq Y_2$, where $Y_1,Y_2\in\mathbb{S}^n$.
\end{enumerate}
\end{lemma}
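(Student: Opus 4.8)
\emph{Proof proposal.} The five items can be established almost independently, with item \ref{Lyapunov_solution} doing most of the heavy lifting. For \ref{Lyapunov_inverse} I would vectorize: the identity $\vect(X^TY+YX)=\mathcal{P}(X)\vect(Y)$ (a consequence of $\vect(AYB)=(B^T\otimes A)\vect(Y)$) shows that $\mathcal{L}_X$ is, under the linear isometry $\vect:(\mathbb{R}^{n\times n},\Vert\cdot\Vert_F)\to(\mathbb{R}^{n^2},\Vert\cdot\Vert_2)$, just multiplication by $\mathcal{P}(X)$. The eigenvalues of $\mathcal{P}(X)$ are the pairwise sums $\lambda_j+\lambda_k$ of eigenvalues of $X$, all with strictly negative real part since $X$ is Hurwitz, hence nonzero; thus $\mathcal{P}(X)$ and $\mathcal{L}_X$ are invertible, and the isometry immediately gives $\Vert\mathcal{L}_X^{-1}\Vert=\Vert\mathcal{P}(X)^{-1}\Vert_2$. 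For \ref{Lyapunov_solution}, the integral $\bar Y:=\int_0^\infty e^{X^Tt}Ze^{Xt}\dd t$ converges absolutely because $\Vert e^{Xt}\Vert_2$ decays exponentially; since $X^T e^{X^Tt}Ze^{Xt}+e^{X^Tt}Ze^{Xt}X$ is the derivative in $t$ of $e^{X^Tt}Ze^{Xt}$, integrating over $[0,\infty)$ gives $X^T\bar Y+\bar YX=-Z$, so $\bar Y$ solves the equation, and uniqueness from \ref{Lyapunov_inverse} forces $Y=\bar Y$. Item \ref{Lyapunov_monotone} is then a one-liner: $\mathcal{L}_X(Y_1)\le\mathcal{L}_X(Y_2)$ means $\mathcal{L}_X(Y_1-Y_2)=-N$ with $N\ge 0$, so by \ref{Lyapunov_solution} $Y_1-Y_2=\int_0^\infty e^{X^Tt}Ne^{Xt}\dd t\ge 0$.

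For \ref{Lyapunov_perturbation}, subtract $\mathcal{L}_X(Y)=-Z$ from the expansion of $\mathcal{L}_{X+\Delta X}(Y+\Delta Y)=-(Z+\Delta Z)$ to get $\mathcal{L}_X(\Delta Y)=-S$, where $S:=\Delta Z+\Delta X^T(Y+\Delta Y)+(Y+\Delta Y)\Delta X$; since $X$ is still Hurwitz, \ref{Lyapunov_solution} represents $\Delta Y=\int_0^\infty e^{X^Tt}Se^{Xt}\dd t$. To obtain the $2$-norm bound for this (generally non-symmetric) $S$, I would bound $\vert u^T\Delta Y v\vert$ for unit vectors $u,v$ by $\Vert S\Vert_2\int_0^\infty\Vert e^{Xt}u\Vert_2\Vert e^{Xt}v\Vert_2\dd t$, apply the Cauchy--Schwarz inequality in $t$, and use $\int_0^\infty\Vert e^{Xt}u\Vert_2^2\dd t=u^THu\le\Vert H\Vert_2$; together with $\Vert S\Vert_2\le\Vert\Delta Z\Vert_2+2\Vert\Delta X\Vert_2\Vert Y+\Delta Y\Vert_2$ this yields the claim. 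For \ref{lyapunov_perturbation_small} I would redo the same subtraction but estimate in the Frobenius norm via $\Vert\mathcal{L}_X^{-1}\Vert=\Vert\mathcal{P}(X)^{-1}\Vert_2=:p$ and submultiplicativity, getting $\Vert\Delta Y\Vert_F\le p\bigl(\Vert\Delta Z\Vert_F+2\Vert\Delta X\Vert_F\Vert Y+\Delta Y\Vert_F\bigr)$. Inserting $\Vert\Delta Z\Vert_F\le\gamma\Vert Z\Vert_F$, $\Vert\Delta X\Vert_F\le\gamma\Vert X\Vert_F$ and the elementary estimate $\Vert Z\Vert_F=\Vert\mathcal{L}_X(Y)\Vert_F\le 2\Vert X\Vert_F\Vert Y\Vert_F$, and abbreviating $a:=\gamma\Vert X\Vert_F$, gives $\Vert\Delta Y\Vert_F\le 4ap\Vert Y\Vert_F+2ap\Vert\Delta Y\Vert_F$; since the hypothesis says $ap\le 1/4$, so $2ap\le 1/2$, absorbing the $\Vert\Delta Y\Vert_F$ term yields $\Vert\Delta Y\Vert_F\le\frac{4ap}{1-2ap}\Vert Y\Vert_F\le 8ap\Vert Y\Vert_F$.

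The only step I expect to require genuine care is the $2$-norm estimate in \ref{Lyapunov_perturbation}: the naive bound $\int_0^\infty\Vert e^{Xt}\Vert_2^2\dd t$ is in general strictly larger than $\Vert H\Vert_2=\bigl\Vert\int_0^\infty e^{X^Tt}e^{Xt}\dd t\bigr\Vert_2$, so one cannot simply pull the operator norm inside the integral; the per-pair-of-vectors Cauchy--Schwarz argument is precisely what recovers the sharp constant $\Vert H\Vert_2$. Everything else is either a standard Lyapunov-equation fact or routine norm bookkeeping built on items \ref{Lyapunov_inverse} and \ref{Lyapunov_solution}.
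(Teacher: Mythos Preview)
Your proposal is correct and self-contained. The paper itself does not prove items \ref{Lyapunov_inverse}--\ref{lyapunov_perturbation_small} directly: it simply cites the literature (Hylla's thesis for \ref{Lyapunov_solution}, Kenney--Hewer for \ref{Lyapunov_perturbation}, Hewer--Kenney for \ref{Lyapunov_inverse} and \ref{lyapunov_perturbation_small}), so your explicit arguments---vectorization via $\mathcal{P}(X)$ for \ref{Lyapunov_inverse}, the integral representation for \ref{Lyapunov_solution}, the per-pair Cauchy--Schwarz trick for \ref{Lyapunov_perturbation}, and the absorb-and-rearrange estimate for \ref{lyapunov_perturbation_small}---are genuinely more informative than what the paper offers. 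Your remark that the naive bound $\int_0^\infty\Vert e^{Xt}\Vert_2^2\dd t$ would overshoot $\Vert H\Vert_2$ is exactly the right observation for \ref{Lyapunov_perturbation}. For \ref{Lyapunov_monotone} the paper does give a proof: it integrates $\frac{d}{dt}(x^TY_jx)$ along trajectories $x(t)=e^{Xt}x_0$ of $\dot x=Xx$, which is the same computation as your integral-formula argument evaluated on $x_0$, so the two proofs coincide once unwound.
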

\begin{proof}
\ref{Lyapunov_solution} is direct consequence of \cite[Theorem 2.2.2.]{hylla2011extension}. \ref{Lyapunov_perturbation} comes from the equation (2.11) in the proof of \cite[Theorem 2.1]{doi:10.1137/0326018}. \ref{Lyapunov_inverse} and \ref{lyapunov_perturbation_small} come from \cite[Section 4]{1102170}. For \ref{Lyapunov_monotone}, let $x$ be the solution of system $\dot{x}=Xx$, $x(0)=x_0$. Then,
\begin{equation*}
    x^T\mathcal{L}_{X}(Y_1)x=\diff{x^TY_1x}{t}\leq\diff{x^TY_2x}{t}=x^T\mathcal{L}_{X}(Y_2)x.
\end{equation*}
Integrating both sides from $0$ to $\infty$, we get $x_0^TY_1x_0\geq x_0^TY_2x_0$. $x_0$ is arbitrary, thus $Y_1\geq Y_2$.
\end{proof}

\subsection{Proof of Lemma \ref{lemma_local_exponentail_stable}}\label{appendix_locally_exponential_stability}
By \eqref{RPI_PE}, the sequence $\{P_i\}_{i=0}^{\infty}$ satisfies 
\begin{equation}\label{Kleinman_one_step}
    P_{i+1} = \mathcal{L}_{\mathcal{A}(P_i)}^{-1}\left(- Q - P_i^TBR^{-1}B^TP_i\right),
\end{equation}
where
$$\mathcal{A}(P_i) = A - BR^{-1}B^TP_i.$$
Since $\mathcal{A}(P^*)$ is Hurwitz, by continuity there always exists a $\bar{\delta}_0>0$, such that $\mathcal{A}(P_i)$ is Hurwitz for all $P_i\in\mathcal{\bar{B}}_{\bar{\delta}_0}(P^*)$. Suppose $P_i\in\mathcal{\bar{B}}_{\bar{\delta}_0}(P^*)$. Subtracting $P^*BR^{-1}B^TP_i + P_i^TBR^{-1}B^TP^*$ from the both sides of the ARE \eqref{ARE} yields
\begin{equation}\label{ARE_reformulation}
\begin{split}
    \mathcal{L}_{\mathcal{A}(P_i)}(P^*) &= - Q - P_i^TBR^{-1}B^TP_i  \\
    &+ (P_i^T-P^*)BR^{-1}B^T(P_i-P^*).
\end{split}
\end{equation}
Subtracting \eqref{ARE_reformulation} from \eqref{Kleinman_one_step}, we have
$$P_{i+1}-P^* = \mathcal{L}^{-1}_{\mathcal{A}(P_i)}(-(P_i^T-P^*)BR^{-1}B^T(P_i-P^*)).$$
Taking norm on both sides of above equation, and using \ref{Lyapunov_inverse} in Lemma \ref{Lyapunov_properties}, we have
\begin{equation*}
    \begin{split}
        \Vert P_{i+1}-P^* \Vert_F &\leq \Vert\mathcal{L}^{-1}_{\mathcal{A}(P_i)}\Vert\Vert BR^{-1}B^T\Vert_F\Vert P_i-P^*\Vert_F^2 \\
        & = \Vert \mathcal{P}(\mathcal{A}(P_i))^{-1}\Vert_2\Vert BR^{-1}B^T\Vert_F\Vert P_i-P^*\Vert_F^2.
    \end{split}
\end{equation*}
By continuity of matrix norm and matrix inverse, there exist a $c_1>0$, such that
$$\Vert \mathcal{P}(\mathcal{A}(P_i))^{-1}\Vert_2\leq c_1,\quad \forall P_i\in\mathcal{\bar{B}}_{\bar{\delta}_0}(P^*).$$
So for any $0<\sigma<1$, there exists a $\bar{\delta}_0\geq\delta_0>0$, such that
$$c_1\Vert BR^{-1}B^T\Vert_F\delta_0\leq \sigma,$$
which completes the proof.

\subsection{Proof of Theorem \ref{theorem_local_ISS}}\label{proof_local_ISS}
Before proving Theorem \ref{theorem_local_ISS}, we firstly prove some auxiliary lemmas. The Procedure \ref{procedure_robust_policy_iteration} will exhibit a singularity, if $\hat{G}_{i,22}$ in \eqref{eRPI_PI} is singular, or the cost \eqref{cost_deterinistic} of $\hat{K}_{i+1}$ is infinity. The following lemma shows that if $\Delta G_i$ is small, no singularity will occur. Let $\bar{\delta}_0$ be the one defined in Appendix \ref{appendix_locally_exponential_stability}, then $\delta_0\leq\bar{\delta}_0$. For convenience, define $\hat{A}_i = A-B\hat{K}_i$.
\begin{lemma}\label{lemma_stability_ISS}
For any $\tilde{P}_i\in\mathcal{B}_{\delta_0}(P^*)$, there exists a $d(\delta_0)>0$, independent of $\tilde{P}_i$, such that $\hat{K}_{i+1}$ is stabilizing and $\hat{G}_{i,22}$ is invertible, if $\Vert\Delta G_i\Vert_F< d$.
\end{lemma}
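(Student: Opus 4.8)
The plan is to view $\hat{K}_{i+1}$ as a continuous function of the disturbance $\Delta G_i$ (with $\tilde{P}_i$ entering only as a parameter) and to show that this map sends a small ball $\{\Vert\Delta G_i\Vert_F<d\}$ into the open set of matrices $\hat{G}_{i,22}$ that are invertible and of gains $\hat{K}_{i+1}$ that are stabilizing, with a radius $d$ that can be chosen uniformly over $\tilde{P}_i\in\mathcal{B}_{\delta_0}(P^*)$. The anchor point is that at $\Delta G_i=0$ everything reduces to one step of Kleinman's exact iteration: $\hat{G}_{i,22}=R$ is positive definite, $\hat{K}_{i+1}=R^{-1}B^T\tilde{P}_i$, and hence $A-B\hat{K}_{i+1}=\mathcal{A}(\tilde{P}_i)$, which is Hurwitz for every $\tilde{P}_i\in\mathcal{B}_{\delta_0}(P^*)\subseteq\mathcal{\bar{B}}_{\bar{\delta}_0}(P^*)$ by the continuity argument already used in the proof of Lemma \ref{lemma_local_exponentail_stable} (recall $\delta_0\le\bar{\delta}_0$).

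First I would record the uniform constants. Since $\mathcal{\bar{B}}_{\delta_0}(P^*)$ is compact, $\tilde{P}\mapsto\mathcal{A}(\tilde{P})$ is affine, and $\mathcal{A}(\tilde{P})$ is Hurwitz on this set, Lemma \ref{Lyapunov_properties}\ref{Lyapunov_inverse} together with continuity of the matrix inverse shows that $\tilde{P}\mapsto\Vert\mathcal{L}_{\mathcal{A}(\tilde{P})}^{-1}\Vert=\Vert\mathcal{P}(\mathcal{A}(\tilde{P}))^{-1}\Vert_2$ is continuous, hence bounded by some $\bar{H}<\infty$; set $H(\tilde{P})=\mathcal{L}_{\mathcal{A}(\tilde{P})}^{-1}(-I_n)$, which is positive definite by Lemma \ref{Lyapunov_properties}\ref{Lyapunov_solution} and satisfies $\Vert H(\tilde{P})\Vert_2\le\bar{H}$. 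Set also $M_P\triangleq\Vert B\Vert_2(\Vert P^*\Vert_2+\delta_0)$, so that $\Vert B^T\tilde{P}\Vert_2\le M_P$ on the same set. Writing $\Delta G_i$ in blocks, $\Vert\Delta G_{i,21}\Vert_2$ and $\Vert\Delta G_{i,22}\Vert_2$ are both at most $\Vert\Delta G_i\Vert_F$; consequently, as soon as $\Vert\Delta G_i\Vert_F<\lambda_{\min}(R)$ the matrix $\hat{G}_{i,22}=R+\Delta G_{i,22}$ is positive definite, in particular invertible, with $\Vert\hat{G}_{i,22}^{-1}\Vert_2\le(\lambda_{\min}(R)-\Vert\Delta G_i\Vert_F)^{-1}$, and this threshold does not involve $\tilde{P}_i$.

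Second I would propagate the perturbation to $\hat{K}_{i+1}$. Using $\hat{G}_{i,22}^{-1}-R^{-1}=-\hat{G}_{i,22}^{-1}\Delta G_{i,22}R^{-1}$ together with the bounds above, for $\Vert\Delta G_i\Vert_F\le\lambda_{\min}(R)/2$ one obtains
$$\Vert\hat{K}_{i+1}-R^{-1}B^T\tilde{P}_i\Vert_2\le c(\delta_0)\Vert\Delta G_i\Vert_F,\qquad c(\delta_0)\triangleq\frac{2}{\lambda_{\min}(R)}\bigl(\Vert R^{-1}\Vert_2 M_P+1\bigr),$$
with $c(\delta_0)$ independent of $\tilde{P}_i$. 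Putting $\Delta_i\triangleq-B(\hat{K}_{i+1}-R^{-1}B^T\tilde{P}_i)$, so that $A-B\hat{K}_{i+1}=\mathcal{A}(\tilde{P}_i)+\Delta_i$, a direct computation gives
$$\mathcal{L}_{A-B\hat{K}_{i+1}}(H(\tilde{P}_i))=-I_n+\Delta_i^TH(\tilde{P}_i)+H(\tilde{P}_i)\Delta_i.$$
When $2\Vert\Delta_i\Vert_2\bar{H}<1$ the right-hand side is negative definite while $H(\tilde{P}_i)$ is positive definite, so $A-B\hat{K}_{i+1}$ is Hurwitz by the standard Lyapunov stability argument (alternatively, one may invoke the perturbation estimate in Lemma \ref{Lyapunov_properties}\ref{Lyapunov_perturbation}). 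Since $\Vert\Delta_i\Vert_2\le\Vert B\Vert_2 c(\delta_0)\Vert\Delta G_i\Vert_F$, it suffices to take
$$d(\delta_0)\triangleq\min\Bigl\{\frac{\lambda_{\min}(R)}{2},\ \frac{1}{2\Vert B\Vert_2 c(\delta_0)\bar{H}}\Bigr\}>0,$$
which depends only on $\delta_0$ and the fixed problem data, and this settles both claims.

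The main obstacle is not any single inequality but making sure that every "small enough" threshold is uniform in $\tilde{P}_i$: the only place where $\tilde{P}_i$ could spoil uniformity is the stability margin of $\mathcal{A}(\tilde{P}_i)$, which is exactly why the compactness-plus-continuity step producing the finite bounds $\bar{H}$ and $M_P$ over $\mathcal{\bar{B}}_{\delta_0}(P^*)$ is the crux of the argument; once these are in hand the remainder is routine resolvent-identity and Lyapunov-equation bookkeeping.
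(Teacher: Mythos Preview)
Your proof is correct and rests on the same two pillars as the paper's argument---continuity of the map $\Delta G_i\mapsto(\hat G_{i,22},\hat K_{i+1})$ at $\Delta G_i=0$, and compactness of $\mathcal{\bar B}_{\delta_0}(P^*)$ to make all thresholds uniform in $\tilde P_i$---but you execute these differently. The paper proceeds softly: it shows the image $\mathcal{S}=\{\mathcal{A}(\tilde P_i):\tilde P_i\in\mathcal{\bar B}_{\bar\delta_0}(P^*)\}$ is compact, invokes Heine--Borel to obtain a uniform radius $\underline r$ such that every matrix within $\underline r$ of $\mathcal{S}$ is still Hurwitz, and then uses only that $\hat K_{i+1}$ depends continuously on $\hat G_i$ to land $\hat A_{i+1}$ inside that $\underline r$-tube. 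You instead construct everything explicitly: a uniform bound $\bar H$ on $\Vert H(\tilde P)\Vert_2$, a Lipschitz constant $c(\delta_0)$ for the gain perturbation via the resolvent identity, and a direct Lyapunov certificate $\mathcal{L}_{A-B\hat K_{i+1}}(H(\tilde P_i))<0$. The payoff of your route is an explicit formula for $d(\delta_0)$ in terms of the problem data, whereas the paper's argument is shorter but purely existential. One small slip: from $\Vert\mathcal{L}_{\mathcal{A}(\tilde P)}^{-1}\Vert\le\bar H$ (operator norm in Frobenius) you only get $\Vert H(\tilde P)\Vert_2\le\Vert H(\tilde P)\Vert_F\le\sqrt{n}\,\bar H$, not $\Vert H(\tilde P)\Vert_2\le\bar H$; this changes only the constant in your final formula for $d$, not the validity of the argument.
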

\begin{proof}
Since $\mathcal{\bar{B}}_{\bar{\delta}_0}(P^*)$ is compact and $\mathcal{A}(\cdot)$ is a continuous function, set $\mathcal{S} = \{\mathcal{A}(\tilde{P}_i)\vert \tilde{P}_i\in\mathcal{\bar{B}}_{\bar{\delta}_0}(P^*)\}$ is also compact. By continuity, for each $X\in\mathcal{S}$, there exists a $r(X)>0$ such that any $Y\in\mathcal{B}_{r(X)}(X)$ is Hurwitz. By Heine–Borel theorem, there exists a $\underline{r}>0$, such that each $Y\in\mathcal{B}_{\underline{r}}(X)$ is Hurwitz for all $X\in\mathcal{S}$. By continuity of matrix inversion, there exists $d_1>0$ such that $\hat{G}_{i,22}$ is invertible, if $\Vert\Delta G_i\Vert_F< d_1$. Note that in policy improvement step of Procedure \ref{procedure_policy_itration} (the policy update step in Procedure \ref{procedure_robust_policy_iteration}), the improved policy $\tilde{K}_{i+1}=\tilde{G}_{i,22}^{-1}\tilde{G}_{i,21}$ (the updated policy $\hat{K}_{i+1}$) is continuous function of $\tilde{G}_i$ ($\hat{G}_i$), and there exists a $d_2>0$, such that $\hat{A}_{i+1}\in\mathcal{B}_{\underline{r}}(\mathcal{A}(\tilde{P}_i))$ for all $\tilde{P}_i\in\mathcal{\bar{B}}_{\bar{\delta}_0}(P^*)$, if $\Vert\Delta G_i\Vert_F< d_2$. Thus $\hat{K}_{i+1}$ is stabilizing. Setting $d=\min(d_1,d_2)$ completes the proof.
\end{proof}
By \eqref{eRPI_PE} and Lemma \ref{lemma_stability_ISS}, if $\Vert\Delta G_i\Vert_F< d$, then the sequence $\{\tilde{P}_i\}_{i=0}^{\infty}$ satisfies 
\begin{equation}\label{robust_Kleinman_one_step}
    \tilde{P}_{i+1} = \mathcal{L}_{\mathcal{A}(\tilde{P}_i)}^{-1}\left(- Q - \tilde{P}^T_iBR^{-1}B^T\tilde{P}_i\right) + \mathcal{E}_i,
\end{equation}
where
\begin{equation*}
    \begin{split}
        \mathcal{E}_i &= \mathcal{L}_{\hat{A}_{i+1}}^{-1}\left(-Q-\hat{K}^T_{i+1}R\hat{K}_{i+1}\right) \\
        &- \mathcal{L}_{\mathcal{A}(\tilde{P}_i)}^{-1}\left(- Q - \tilde{P}^T_iBR^{-1}B^T\tilde{P}_i\right).
    \end{split}
\end{equation*}
The following lemma gives an upper bound of $\Vert \mathcal{E}_i\Vert_F$ in terms of $\Vert\Delta G_i\Vert_F$.
\begin{lemma}\label{lemma_total_error_bounded_ISS}
For any $\tilde{P}_i\in\mathcal{B}_{\delta_0}(P^*)$ and any $c_2>0$, there exists a $0<\delta_1^1(\delta_0,c_2)\leq d$, independent of $\tilde{P}_i$, where $d$ is defined in Lemma \ref{lemma_stability_ISS}, such that
$$\Vert \mathcal{E}_i\Vert_F\leq c_3\Vert \Delta G_i\Vert_F<c_2,$$ 
if $\Vert \Delta G_i\Vert_F<\delta_1^1$, where $c_3(\delta_0)>0$.
\end{lemma}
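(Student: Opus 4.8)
The plan is to recognize $\mathcal{E}_i$ as the perturbation $\Delta Y$ of the solution of a Lyapunov equation whose data are perturbed from a reference problem, and then apply part~\ref{lyapunov_perturbation_small} of Lemma~\ref{Lyapunov_properties}. Write $\tilde{K}_{i+1} = R^{-1}B^T\tilde{P}_i$ for the disturbance-free improved gain; then $\mathcal{A}(\tilde{P}_i) = A - B\tilde{K}_{i+1}$ and $\tilde{P}_i^TBR^{-1}B^T\tilde{P}_i = \tilde{K}_{i+1}^TR\tilde{K}_{i+1}$, so with $X = \mathcal{A}(\tilde{P}_i)$, $Z = Q + \tilde{K}_{i+1}^TR\tilde{K}_{i+1}$, $\Delta X = -B(\hat{K}_{i+1}-\tilde{K}_{i+1})$ and $\Delta Z = \hat{K}_{i+1}^TR\hat{K}_{i+1} - \tilde{K}_{i+1}^TR\tilde{K}_{i+1}$, the definition of $\mathcal{E}_i$ says exactly that $Y \triangleq \mathcal{L}_X^{-1}(-Z)$ and $Y + \mathcal{E}_i = \mathcal{L}_{X+\Delta X}^{-1}(-(Z+\Delta Z))$, i.e. $\mathcal{E}_i = \Delta Y$ in the notation of Lemma~\ref{Lyapunov_properties}. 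Both inverse Lyapunov operators are well defined provided $\Vert\Delta G_i\Vert_F < d$: then $\hat{K}_{i+1}=\hat{G}_{i,22}^{-1}\hat{G}_{i,21}$ is well defined and stabilizing by Lemma~\ref{lemma_stability_ISS}, and $X$ is Hurwitz because $\tilde{P}_i\in\mathcal{B}_{\delta_0}(P^*)\subset\mathcal{\bar{B}}_{\bar{\delta}_0}(P^*)$. Also $Z\geq Q$ and $Q\neq 0$ (otherwise $(A,Q^{1/2})$ would not be observable), so $Z$ is nonzero, as required by part~\ref{lyapunov_perturbation_small}.

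Next I would bound $\Vert\Delta X\Vert_F$ and $\Vert\Delta Z\Vert_F$ by a constant multiple of $\Vert\Delta G_i\Vert_F$, uniformly in $\tilde{P}_i$. Since $\hat{K}_{i+1} = (R+\Delta G_{i,22})^{-1}(B^T\tilde{P}_i+\Delta G_{i,21})$ with $\hat{G}_{i,22}=R+\Delta G_{i,22}$ bounded away from singularity for $\Vert\Delta G_i\Vert_F<d$, a Neumann-series expansion of $(R+\Delta G_{i,22})^{-1}$ gives a genuinely linear estimate $\Vert\hat{K}_{i+1}-\tilde{K}_{i+1}\Vert_F\leq c_4\Vert\Delta G_i\Vert_F$, where $c_4$ depends only on $\Vert R^{-1}\Vert_2$ and $\sup\{\Vert B^T\tilde{P}_i\Vert_2:\tilde{P}_i\in\mathcal{\bar{B}}_{\bar{\delta}_0}(P^*)\}$ and is hence independent of $\tilde{P}_i$. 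Since $\Delta X=-B(\hat{K}_{i+1}-\tilde{K}_{i+1})$ and $\Delta Z=(\hat{K}_{i+1}-\tilde{K}_{i+1})^TR\hat{K}_{i+1}+\tilde{K}_{i+1}^TR(\hat{K}_{i+1}-\tilde{K}_{i+1})$, and all factors appearing here stay in a bounded set as $\tilde{P}_i$ ranges over the compact ball $\mathcal{\bar{B}}_{\bar{\delta}_0}(P^*)$, this upgrades to $\Vert\Delta X\Vert_F + \Vert\Delta Z\Vert_F\leq c_5\Vert\Delta G_i\Vert_F$ for some $c_5=c_5(\delta_0)$, still uniform in $\tilde{P}_i$.

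It then remains to feed these into part~\ref{lyapunov_perturbation_small} of Lemma~\ref{Lyapunov_properties}. On $\mathcal{\bar{B}}_{\bar{\delta}_0}(P^*)$ the matrix $X=\mathcal{A}(\tilde{P}_i)$ is Hurwitz and depends continuously on $\tilde{P}_i$, so by compactness $\Vert X\Vert_F$, $\Vert Z\Vert_F$, $\Vert\mathcal{P}(X)^{-1}\Vert_2=\Vert\mathcal{L}_X^{-1}\Vert$ (bounded by the $c_1$ of Appendix~\ref{appendix_locally_exponential_stability}) and $\Vert Y\Vert_F=\Vert\int_0^\infty e^{X^Tt}Ze^{Xt}\dd t\Vert_F$ are all bounded above, while $\Vert X\Vert_F$ and $\Vert Z\Vert_F$ are bounded below by positive constants depending only on $\delta_0$. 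Choosing $\gamma=c_5\Vert\Delta G_i\Vert_F/\min\{\Vert X\Vert_F,\Vert Z\Vert_F\}$ validates the relative-perturbation hypotheses $\Vert\Delta X\Vert_F\leq\gamma\Vert X\Vert_F$ and $\Vert\Delta Z\Vert_F\leq\gamma\Vert Z\Vert_F$, and shrinking the admissible bound on $\Vert\Delta G_i\Vert_F$ further makes $\gamma\Vert X\Vert_F\Vert\mathcal{P}(X)^{-1}\Vert_2\leq 1/4$; part~\ref{lyapunov_perturbation_small} then yields $\Vert\mathcal{E}_i\Vert_F=\Vert\Delta Y\Vert_F\leq 8\gamma\Vert X\Vert_F\Vert\mathcal{P}(X)^{-1}\Vert_2\Vert Y\Vert_F$, and collecting the uniform bounds gives $\Vert\mathcal{E}_i\Vert_F\leq c_3\Vert\Delta G_i\Vert_F$ with $c_3=c_3(\delta_0)$. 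Taking $\delta_1^1$ equal to the minimum of $d$, of the smallness thresholds used along the way, and of $c_2/c_3$ finishes the proof, since then $\Vert\Delta G_i\Vert_F<\delta_1^1$ forces $\Vert\mathcal{E}_i\Vert_F\leq c_3\Vert\Delta G_i\Vert_F<c_2$. The main obstacle is not any single estimate but the bookkeeping of uniformity: every constant — $c_3$, $\delta_1^1$, and the intermediate ones — must be independent of $\tilde{P}_i$, which forces one to argue throughout on the closed ball $\mathcal{\bar{B}}_{\bar{\delta}_0}(P^*)$ and to extract uniform two-sided bounds by compactness and continuity, exactly the device already used for $c_1$ in Appendix~\ref{appendix_locally_exponential_stability} and for $\underline{r}$ in Lemma~\ref{lemma_stability_ISS}.
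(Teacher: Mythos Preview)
Your proposal is correct and follows essentially the same approach as the paper: identify $\mathcal{E}_i$ with $\Delta Y$ in the perturbed Lyapunov setting of Lemma~\ref{Lyapunov_properties}\ref{lyapunov_perturbation_small}, bound $\Vert\Delta X\Vert_F,\Vert\Delta Z\Vert_F$ linearly in $\Vert\Delta G_i\Vert_F$, and extract all constants uniformly via compactness of $\mathcal{\bar B}_{\bar\delta_0}(P^*)$. Your write-up even supplies a few details the paper leaves implicit, such as the Neumann-series estimate for $\hat K_{i+1}-\tilde K_{i+1}$ and the observation that $Z\neq 0$ because $(A,Q^{1/2})$ is observable.
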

\begin{proof}
In view of \ref{Lyapunov_perturbation} in Lemma \ref{Lyapunov_properties}, set 
\begin{equation*}
    \begin{split}
        &X = \mathcal{A}(\tilde{P}_i),\ X+\Delta X = \hat{A}_{i+1},    \\
        &Z = -Q - \tilde{P}^T_iBR^{-1}B^T\tilde{P}_i,\ Z + \Delta Z = -Q-\hat{K}^T_{i+1}R\hat{K}_{i+1},  \\
        &Y = \mathcal{L}_{\mathcal{A}(\tilde{P}_i)}^{-1}\left(- Q - \tilde{P}^T_iBR^{-1}B^T\tilde{P}_i\right),  \\
        &Y+\Delta Y=\mathcal{L}_{\hat{A}_{i+1}}^{-1}\left(-Q-\hat{K}^T_{i+1}R\hat{K}_{i+1}\right).
    \end{split}
\end{equation*}
Then by \eqref{robust_Kleinman_one_step}, $\Delta Y = \mathcal{E}_i$.

Define
\begin{equation*}
    \begin{split}
    \underline{\gamma_1} &= \max\left\{\Vert Y\Vert_F\vert \tilde{P}_i\in\mathcal{\bar{B}}_{\bar{\delta}_0}(P^*)\right\},  \\
    \underline{\gamma_2} &= \max\left\{8\Vert X\Vert_F\Vert \mathcal{P}(X)^{-1}\Vert_2\vert \tilde{P}_i\in\mathcal{\bar{B}}_{\bar{\delta}_0}(P^*)\right\} \\
    \underline{\gamma_3} &= \min\{\Vert X\Vert_F\vert \tilde{P}_i\in\mathcal{\bar{B}}_{\bar{\delta}_0}(P^*)\},\\ \underline{\gamma_4} &= \min\{\Vert Z \Vert_F\vert \tilde{P}_i\in\mathcal{\bar{B}}_{\bar{\delta}_0}(P^*)\}, \\
    \underline{\gamma_5} &= \min\left\{c_2(\underline{\gamma_1}\cdot\underline{\gamma_2})^{-1},\ 2\underline{\gamma_2}^{-1} \right\},
\end{split}
\end{equation*}
By continuity, $\underline{\gamma_i}$, $i=1,\cdots,5$ are all well-defined positive and finite constants. 

For all $\tilde{P}_i\in\mathcal{B}_{\delta_0}(P^*)$, it can be checked that $\Vert\Delta X\Vert_F\leq c_4\Vert \Delta G_i\Vert_F$, $\Vert\Delta Z\Vert_F\leq c_5\Vert \Delta G_i\Vert_F$ for some $c_4(\delta_0,d)>0$ and $c_5(\delta_0,d)>0$, then we have
\begin{equation*}
    \max\left(\Vert\Delta X\Vert_F/\Vert X\Vert_F, \Vert\Delta Z\Vert_F/\Vert Z\Vert_F\right)\leq c_6 \Vert \Delta G_i\Vert_F,
\end{equation*}
where $c_6 = \max(\underline{\gamma_3}^{-1}c_4, \underline{\gamma_4}^{-1}c_5)$. Thus, there exists $0<\delta_1^1\leq d$, such that $c_6\delta_1^1<\underline{\gamma_5}$. Then by setting $\gamma = c_6\Vert \Delta G_i\Vert_F$, one can check that the conditions in \ref{lyapunov_perturbation_small} in Lemma \ref{Lyapunov_properties} are satisfied for all $\Vert \Delta G_i\Vert_F<\delta_1^1$ and $\tilde{P}_i\in\mathcal{B}_{\delta_0}(P^*)$, which completes the proof, with $c_3 = \underline{\gamma_1}c_6\underline{\gamma_2}$.
\end{proof}
Now we are ready to prove Theorem \ref{theorem_local_ISS}.
\begin{proof}[Proof of Theorem \ref{theorem_local_ISS}]
Let $c_2=(1-\sigma)\delta_0$ in Lemma \ref{lemma_total_error_bounded_ISS}, and $\delta_1$ be equal to the $\delta_1^1$ associated with $c_2$. Then \ref{theorem_local_ISS_item_well_defined} in Theorem \ref{theorem_local_ISS} is immediately obtained. 

For any $i\in\mathbb{Z}_+$, if $\tilde{P}_i\in\mathcal{B}_{\delta_0}(P^*)$, then
\begin{align}
    \Vert \tilde{P}_{i+1} - P^*\Vert_F&\leq \Vert \mathcal{E}_i\Vert_F +\nonumber \\
    &\left\Vert \mathcal{L}^{-1}_{\mathcal{A}(\tilde{P}_i)}(Q+\tilde{P}^T_iBR^{-1}B^T\tilde{P}_i) - P^* \right\Vert_F \nonumber \\
    &\leq \sigma\Vert \tilde{P}_i - P^*\Vert_F + c_3\Vert\Delta G_i\Vert_F \label{proof_theorem_local_ISS_inequality_1}\\
    &\leq  \sigma\Vert \tilde{P}_i - P^*\Vert_F + c_3\Vert\Delta G\Vert_\infty \label{proof_theorem_local_ISS_inequality_2}\\
    &<\sigma\delta_0 + c_3\delta_1<\sigma\delta_0 + c_2 =\delta_0, \label{proof_theorem_local_ISS_inequality_3}
\end{align}
where \eqref{proof_theorem_local_ISS_inequality_1} and \eqref{proof_theorem_local_ISS_inequality_3} are due to Lemmas \ref{lemma_local_exponentail_stable} and \ref{lemma_total_error_bounded_ISS}. By induction, \eqref{proof_theorem_local_ISS_inequality_1} to \eqref{proof_theorem_local_ISS_inequality_3} hold for all $i\in\mathbb{Z}_+$, thus by \eqref{proof_theorem_local_ISS_inequality_2},
\begin{equation*}
    \begin{split}
        &\Vert \tilde{P}_{i} - P^*\Vert_F \leq \sigma^2\Vert \tilde{P}_{i-2}-P^*\Vert_F + (\sigma + 1)c_3\Vert\Delta G\Vert_\infty\\
        &\leq \cdots \leq \sigma^{i}\Vert \tilde{P}_{0}-P^*\Vert_F + (1+ \cdots + \sigma^{i-1})c_3\Vert\Delta G\Vert_\infty\\
        &<\sigma^{i}\Vert \tilde{P}_{0}-P^*\Vert_F + \frac{c_3}{1-\sigma}\Vert\Delta G\Vert_\infty,
    \end{split}
\end{equation*}
which proves \ref{theorem_local_ISS_item_local_ISS} in Theorem \ref{theorem_local_ISS}.

In terms of \ref{theorem_local_ISS__item_converging_input_converging_state} in Theorem \ref{theorem_local_ISS}, for any $\epsilon>0$, there exists a $i_1\in\mathbb{Z}_+$, such that $\sup\{\Vert\Delta G_i\Vert_F\}_{i=i_1}^\infty<\gamma^{-1}(\epsilon/2)$. Take $i_2\geq i_1$. For $k\geq i_2$, because $\tilde{P}_i$ is bounded, we have by \ref{theorem_local_ISS_item_local_ISS} in Lemma \ref{theorem_local_ISS},
\begin{equation*}
    \begin{split}
       \Vert \tilde{P}_{i}-P^*\Vert_F&\leq\beta(\Vert\tilde{P}_{i_2}-P^*\Vert_F,i-i_2) + \epsilon/2 \\
       &\leq \beta(c_7,i-i_2) + \epsilon/2.
    \end{split}
\end{equation*}
Since $\lim_{i\rightarrow\infty}\beta(c_7,i-i_2)=0$, there is a $i_3\geq i_2$ such that $\beta(c_7,i-i_2)<\epsilon/2$ for all $i\geq i_3$, which completes the proof.
\end{proof}
\subsection{Proof of Corollary \ref{corollary_global}}\label{proof_corollary_global}
Notice that all the conclusions of Corollary \ref{corollary_global} can be implied by Theorem \ref{theorem_local_ISS} if $\delta_2<\min(\gamma^{-1}(\epsilon),\delta_1)$, and $\tilde{P}_1\in\mathcal{B}_{\delta_0}(P^*)$ for Procedure \ref{procedure_robust_policy_iteration}. Thus the proof of Corollary \ref{corollary_global} reduces to the proof of the following lemma.
\begin{lemma}\label{lemma_converge_to_neighbourhood}
Given a stabilizing $\hat{K}_1$, there exists a $0<\delta_2<\min(\gamma^{-1}(\epsilon),\delta_1)$ and a $\bar{i}\in\mathbb{Z}_+$, such that $\tilde{P}_{\bar{i}}\in\mathcal{B}_{\delta_0}(P^*)$, as long as $\Vert \Delta G\Vert_\infty<\delta_2$.
\end{lemma}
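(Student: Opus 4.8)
The plan is to reduce the statement to a continuity-and-compactness argument sitting on top of Kleinman's exact convergence result. First I would run the \emph{disturbance-free} version of Procedure~\ref{procedure_policy_itration} starting from the given stabilizing gain $\hat{K}_1$; call the resulting exact sequences $\{K_i\}$ and $\{P_i\}$. By \ref{PI_converge} in Theorem~\ref{theorem_standard_PI}, $P_i\to P^*$, so there is a finite index $\bar{i}$ with $\Vert P_{\bar{i}}-P^*\Vert_F<\delta_0/2$. The point is then that the perturbed iterates $\tilde{P}_1,\dots,\tilde{P}_{\bar{i}}$ produced by Procedure~\ref{procedure_robust_policy_iteration} depend continuously on the finitely many disturbances $\Delta G_1,\dots,\Delta G_{\bar{i}-1}$ and reduce to the exact ones when those vanish; hence, if $\Vert\Delta G\Vert_\infty$ is small enough, $\tilde{P}_{\bar{i}}$ lies within $\delta_0/2$ of $P_{\bar{i}}$, and therefore in $\mathcal{B}_{\delta_0}(P^*)$.

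To make this precise I would argue by a finite induction on $i=1,\dots,\bar{i}$, carrying the joint hypothesis that $\hat{K}_i$ is stabilizing, $\hat{G}_{i,22}$ is invertible, and $\tilde{P}_i$ lies in a prescribed compact neighbourhood $\mathcal{N}_i$ of $P_i$. One step of Procedure~\ref{procedure_robust_policy_iteration} is the composition of: (a) solving the Lyapunov equation~\eqref{eRPI_PE} for $\tilde{P}_i$ from $\hat{K}_i$, which by \ref{Lyapunov_solution} in Lemma~\ref{Lyapunov_properties} is a continuous operation wherever $\hat{A}_i$ is Hurwitz; (b) forming $\hat{G}_i=\tilde{G}_i+\Delta G_i$; and (c) the rational map $\hat{K}_{i+1}=\hat{G}_{i,22}^{-1}\hat{G}_{i,21}$, continuous wherever $\hat{G}_{i,22}$ is invertible. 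Along the exact trajectory $A-BK_i$ is Hurwitz and $G_{i,22}=R>0$, so a Heine--Borel argument identical in spirit to the proof of Lemma~\ref{lemma_stability_ISS} furnishes uniform radii that make these maps well-defined and uniformly continuous on a compact ``tube'' around the finite set $\{P_1,\dots,P_{\bar{i}}\}$. Propagating a small perturbation through these $\bar{i}$ continuous steps then yields a modulus $\omega(\cdot)$, vanishing at $0$, with $\Vert\tilde{P}_{\bar{i}}-P_{\bar{i}}\Vert_F\le\omega(\Vert\Delta G\Vert_\infty)$, and along the way all $\hat{K}_i$ stabilizing and all $\hat{G}_{i,22}$ invertible.

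Finally I would pick $\delta_2>0$ small enough that simultaneously $\omega(\delta_2)<\delta_0/2$ (so $\tilde{P}_{\bar{i}}\in\mathcal{B}_{\delta_0}(P^*)$), each intermediate iterate stays inside its tube $\mathcal{N}_i$ so the induction closes, and $\delta_2<\min(\gamma^{-1}(\epsilon),\delta_1)$ as demanded by the statement; further shrinking $\delta_2$ is harmless. The main obstacle I anticipate is the bookkeeping in step~(a): one must guarantee that $\hat{A}_i$ stays Hurwitz for \emph{every} $i\le\bar{i}$ \emph{before} invoking the Lyapunov solution formula, so the well-definedness of the perturbed recursion has to be threaded through the induction rather than assumed. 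Because $\bar{i}$ is a fixed finite number depending only on $\hat{K}_1$, there is no difficulty with constants degrading as $i\to\infty$: the argument is genuinely local in the iteration index, and the compactness involved is elementary.
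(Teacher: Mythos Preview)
Your approach is correct and noticeably more direct than the paper's. You fix $\bar{i}$ once from the convergence of the \emph{exact} Kleinman sequence launched at $\hat{K}_1$, then push a finite-step continuity argument through the perturbed recursion, comparing $\tilde{P}_i$ to the fixed nominal $P_i$ at each stage; since only finitely many continuous maps are composed and the nominal trajectory sits in the open region where they are well-defined, this closes without difficulty. The paper proceeds differently: it first proves two auxiliary results (Lemmas~\ref{lemma_stability} and~\ref{lemma_boundedness}) giving explicit tolerances on $\Vert\Delta G_i\Vert_F$ under which $\hat{K}_{i+1}$ stays stabilizing and $\tilde{P}_i\le 6\tilde{P}_1$ uniformly in $i$, thereby obtaining a compact set $\mathcal{M}$ of all reachable perturbed iterates on which $\mathcal{A}(P)$ is Hurwitz. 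It then recasts one perturbed step as an exact Newton step plus an error term, shows the Newton map $\mathcal{N}$ is Lipschitz on $\mathcal{M}$, and applies a discrete Gronwall inequality, comparing the perturbed trajectory not to the fixed nominal one from $\hat{K}_1$ but to an exact trajectory restarted from $\tilde{P}_{\bar{i}-k_0}$, where $k_0$ is a \emph{uniform} convergence time over all of $\mathcal{M}$. Your route buys economy: no auxiliary lemmas, no Lipschitz estimate, no Gronwall. The paper's route buys an explicit perturbation budget $b_{\bar{i}}$ and, more importantly, the uniform a~priori bound $\Vert\tilde{P}_i\Vert_F\le 6\Vert\tilde{P}_1\Vert_F$ independent of $\bar{i}$, which directly furnishes the constant $M_0$ appearing in Corollary~\ref{corollary_global}; with your argument that bound would instead come from the maximum of $\Vert P_1\Vert_F,\dots,\Vert P_{\bar{i}}\Vert_F$ plus the tube radius.
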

The next two lemmas state that under certion conditions on $\Vert \Delta G_i\Vert_F$, each element in $\{\hat{K}_i\}_{i=1}^{\bar{i}}$ is stabilizing, each element in $\{\hat{G}_{i,22}\}_{i=1}^{\bar{i}}$ is invertible and $\{\tilde{P}_i\}_{i=1}^{\bar{i}}$ is bounded. For simplicity, in the following we assume $Q>I_n$ and $R>I_m$. All the proofs still work for any $Q>0$ and $R>0$, by suitable rescaling.
\begin{lemma}\label{lemma_stability}
If $\hat{K}_i$ is stabilizing, then $\hat{G}_{i,22}$ is nonsingular and $\hat{K}_{i+1}$ is stabilizing, as long as $\Vert \Delta G_i\Vert_F < a_i$, where
\begin{equation*}
     a_i=\left(m(\sqrt{n} + \Vert \hat{K}_{i}\Vert_2)^2+m(\sqrt{n} + \Vert \hat{K}_{i+1}\Vert_2)^2\right)^{-1}.
\end{equation*}
Furthermore,
\begin{equation}
    \Vert\hat{K}_{i+1}\Vert_F \leq 2\Vert R^{-1}\Vert_F(1 + \Vert B^T\tilde{P}_i\Vert_F).  \label{K_bound_by_P}
\end{equation}
\end{lemma}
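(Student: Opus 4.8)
The plan is to decompose $\hat G_{i,22}$ as $R + \Delta G_{i,22}$, since $\tilde G_{i,22} = R$ identically, so the only perturbation in the $(2,2)$-block comes from $\Delta G_i$. To get nonsingularity of $\hat G_{i,22}$, I would note $\|\Delta G_{i,22}\|_2 \le \|\Delta G_i\|_F$, so if $\|\Delta G_i\|_F < a_i$ and $R > I_m$, then $\|\Delta G_{i,22}\|_2 < 1 \le \|R^{-1}\|_2^{-1}$, hence $\hat G_{i,22} = R + \Delta G_{i,22}$ is invertible by the standard Neumann-series argument (note $a_i \le 1$ because the first term in the denominator already exceeds $mn \ge 1$). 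For the bound \eqref{K_bound_by_P}, I would write $\hat K_{i+1} = \hat G_{i,22}^{-1}\hat G_{i,21} = (R + \Delta G_{i,22})^{-1}(B^T\tilde P_i + \Delta G_{i,21})$; using $\|(R+\Delta G_{i,22})^{-1}\|_F \le 2\|R^{-1}\|_F$ (valid once $\|\Delta G_{i,22}\|_2 \le 1/2$, which $a_i$ guarantees) and $\|\Delta G_{i,21}\|_F \le \|\Delta G_i\|_F < a_i \le 1$ gives $\|\hat K_{i+1}\|_F \le 2\|R^{-1}\|_F(\|B^T\tilde P_i\|_F + 1)$.

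The substantive part is showing $\hat A_{i+1} = A - B\hat K_{i+1}$ is Hurwitz. The natural route is a Lyapunov argument anchored at the known-stable $\hat A_i = A - B\hat K_i$. From the policy-evaluation equation \eqref{eRPI_PE}, $\mathcal{H}(\tilde G_i, \hat K_i) = 0$ expands (exactly as in the derivation of \eqref{Kleinman_one_step}) to $\mathcal{L}_{\hat A_i}(\tilde P_i) = -(Q + \hat K_i^T R \hat K_i)$. I would then compute $\mathcal{L}_{\hat A_{i+1}}(\tilde P_i)$ by completing the square in the control-gain variable: a Kleinman-type identity gives
$$\mathcal{L}_{\hat A_{i+1}}(\tilde P_i) = -(Q + \hat K_{i+1}^T R \hat K_{i+1}) - (\hat K_i - \hat K_{i+1})^T R (\hat K_i - \hat K_{i+1}) + \mathcal{R}_i,$$
where $\mathcal{R}_i$ is a residual that vanishes when $\Delta G_i = 0$ (in that error-free case $\hat K_{i+1} = R^{-1}B^T\tilde P_i$ and this is the classical monotonicity identity). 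Explicitly, $\hat K_{i+1} = R^{-1}(B^T\tilde P_i + \Delta G_{i,21} - \Delta G_{i,22}\hat K_{i+1})$, so the deviation of $\hat K_{i+1}$ from $R^{-1}B^T\tilde P_i$ is controlled linearly by $\|\Delta G_i\|_F$ and by $\|\hat K_{i+1}\|_F$ — and the latter is in turn bounded via \eqref{K_bound_by_P}. The point is that $\mathcal{L}_{\hat A_{i+1}}(\tilde P_i) = -(\text{negative definite piece} \succeq Q \succ I_n) + \mathcal{R}_i$, so if $\|\mathcal{R}_i\|_2 < 1$ the right-hand side stays negative definite, and since $\tilde P_i \succeq 0$ (it solves a Lyapunov equation with stable $\hat A_i$ and PSD right-hand side, so $\tilde P_i$ is PSD, in fact PD), $\hat A_{i+1}$ is Hurwitz by the Lyapunov stability theorem.

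The remaining task is the bookkeeping that turns "$\|\mathcal{R}_i\|_2 < 1$'' into the stated threshold $a_i = \big(m(\sqrt n + \|\hat K_i\|_2)^2 + m(\sqrt n + \|\hat K_{i+1}\|_2)^2\big)^{-1}$. I expect this to be the main obstacle, and it is purely a matter of tracking constants: one expands $\mathcal{R}_i$ into terms each carrying one factor of a $\Delta G_i$-block (bounded by $\|\Delta G_i\|_F$) times matrices built from $B^T\tilde P_i$, $\hat K_i$, $\hat K_{i+1}$, and $R$. Rewriting $B^T\tilde P_i$ via $\hat K_i = R^{-1}(B^T\tilde P_i + \Delta G_{i-1,\cdot})$ is awkward; cleaner is to bound the cross terms directly — a typical term looks like $\hat K_{i+1}^T \Delta G_{i,22}\hat K_{i+1}$ or $(\hat K_i - \hat K_{i+1})^T \Delta G_{i,\cdot}$ type expressions — and the factors $(\sqrt n + \|\hat K_i\|_2)$ and $(\sqrt n + \|\hat K_{i+1}\|_2)$ arise as bounds on $\|[I_n \ {-\hat K_i^T}]\|_2$ and $\|[I_n \ {-\hat K_{i+1}^T}]\|_2$ appearing when one writes everything through the $\mathcal{H}(\cdot,\cdot)$ operator, with the factor $m$ absorbing the dimension in passing between Frobenius and spectral norms on the $m \times m$ and $m \times n$ blocks. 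Matching the algebra so that the sum of all residual bounds is exactly $\|\Delta G_i\|_F \cdot a_i^{-1}$, hence $< 1$ under the hypothesis, is the one place where care is needed; everything else is a direct application of Lemma \ref{Lyapunov_properties}\ref{Lyapunov_monotone} and the Neumann bound.
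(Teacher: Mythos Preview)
Your treatment of the nonsingularity of $\hat G_{i,22}$ and the bound \eqref{K_bound_by_P} is essentially the paper's: both are Neumann-series arguments using $\tilde G_{i,22}=R$ and $\|R^{-1}\|_F\le\sqrt m$ (the paper phrases it via $\|\tilde G_{i,22}^{-1}(\hat G_{i,22}-\tilde G_{i,22})\|_F<a_i\sqrt m<1/2$, but the content is the same).

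For the stabilization claim, however, your route genuinely differs from the paper's. You propose the Kleinman completion-of-square identity
\[
\mathcal{L}_{\hat A_{i+1}}(\tilde P_i)=-(Q+\hat K_{i+1}^TR\hat K_{i+1})-(\hat K_i-\hat K_{i+1})^TR(\hat K_i-\hat K_{i+1})+\mathcal{R}_i,
\]
with $\mathcal{R}_i$ built from $E_i=\hat K_{i+1}-R^{-1}B^T\tilde P_i$, and then bound $\|\mathcal{R}_i\|_2$. The paper instead works with the quadratic forms $x^T\mathcal{H}(\cdot,K)x=\trace(\cdot\,\mathcal{X}_K)$ and chains the inequalities
\[
\trace(\tilde G_i\mathcal{X}_{\hat K_{i+1}})\le\trace(\hat G_i\mathcal{X}_{\hat K_{i+1}})+\cdots\le\trace(\hat G_i\mathcal{X}_{\hat K_i})+\cdots\le\trace(\tilde G_i\mathcal{X}_{\hat K_i})+\cdots=0+\cdots,
\]
the middle step using the \emph{minimization} property of $\hat K_{i+1}$ with respect to $\hat G_i$. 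The perturbation terms are bounded entrywise: $|\trace(\Delta G_i\mathcal{X}_K)|\le\|\Delta G_i\|_F\,\mathbf 1^T|\mathcal{X}_K|_{abs}\mathbf 1$, and the elementary estimate $\mathbf 1^T|\mathcal{X}_K|_{abs}\mathbf 1\le m(\sqrt n+\|K\|_2)^2$ is exactly what produces the stated $a_i$.

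Your approach is sound and yields a valid Lyapunov argument (since $\tilde P_i\succ0$ when $Q\succ0$), but the constant you will get is not $a_i$: carrying out your bookkeeping, $RE_i=\Delta G_{i,21}-\Delta G_{i,22}\hat K_{i+1}$ gives $\|\mathcal{R}_i\|_2\le 2\|\Delta G_i\|_F(1+\|\hat K_{i+1}\|_2)(\|\hat K_i\|_2+\|\hat K_{i+1}\|_2)$, so the threshold you obtain is $\bigl[2(1+\|\hat K_{i+1}\|_2)(\|\hat K_i\|_2+\|\hat K_{i+1}\|_2)\bigr]^{-1}$ rather than $a_i$. This is functionally equivalent for the downstream Lemma \ref{lemma_boundedness} and Corollary \ref{corollary_global} (any threshold of the form $C/(1+\|\hat K_i\|_2^2+\|\hat K_{i+1}\|_2^2)$ suffices there), but it does not match the lemma as stated. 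Your remark that the $(\sqrt n+\|K\|_2)$ factors ``arise when one writes everything through the $\mathcal{H}(\cdot,\cdot)$ operator'' is exactly right and is in fact the paper's path; if you pursue that instead of the residual expansion, you recover $a_i$ directly.
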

\begin{proof}
By definition,
$$\Vert \tilde{G}_{i,22}^{-1}(\hat{G}_{i,22}-\tilde{G}_{i,22})\Vert_F < a_i\Vert \tilde{G}_{i,22}^{-1}\Vert_F.$$
Since $R>I_m$, the eigenvalues $\lambda_j(\tilde{G}_{i,22}^{-1})\in(0,1]$ for all $1\leq j\leq m$. Then by the fact that for any $X\in\mathbb{S}^m$
$$\Vert X\Vert_F = \Vert \Lambda_X\Vert_F,\quad \Lambda_X = \mathrm{diag}\{\lambda_1(X),\cdots,\lambda_m(X)\},$$
we have 
\begin{equation}\label{R_inverse_pertrubed_bound}
    \Vert \tilde{G}_{i,22}^{-1}(\hat{G}_{i,22}-\tilde{G}_{i,22})\Vert_F< a_i\sqrt{m} < 0.5.
\end{equation}
Thus by \cite[Section 5.8]{horn2012matrix}, $\hat{G}_{i,22}$ is invertible.

For any $x\in\mathbb{R}^{n}$ on the unit ball, define
$$\mathcal{X}_{\hat{K}_i} = \left[\begin{array}{c}
         I  \\
         -\hat{K}_i
    \end{array}\right]xx^T \left[\begin{array}{cc}
        I &  -\hat{K}_i^T
    \end{array}\right].$$
From \eqref{eRPI_PE} and \eqref{eRPI_PI} we have
$$x^T\mathcal{H}(\tilde{G}_i,\hat{K}_i)x = \trace(\tilde{G}_i\mathcal{X}_{\hat{K}_i}) = 0,$$
and
$$\trace(\hat{G}_i\mathcal{X}_{\hat{K}_{i+1}}) = \min_{K\in\mathbb{R}^{m\times n}} \trace(\hat{G}_i\mathcal{X}_K).$$
Then 
\begin{align}
    &\mathrm{tr}(\tilde{G}_i\mathcal{X}_{\hat{K}_{i+1}}) \leq \mathrm{tr}(\hat{G}_i\mathcal{X}_{\hat{K}_{i+1}}) + \Vert \Delta G_i\Vert_F \mathrm{tr}(\mathbf{1}\mathbf{1}^T\vert \mathcal{X}_{\hat{K}_{i+1}}\vert_{abs}) \nonumber\\   
    &\leq \mathrm{tr}(\hat{G}_i\mathcal{X}_{\hat{K}_{i}}) + \Vert \Delta G_i\Vert_F \mathbf{1}^T\vert \mathcal{X}_{\hat{K}_{i+1}}\vert_{abs}\mathbf{1}\nonumber \\
    &\leq \mathrm{tr}(\tilde{G}_i\mathcal{X}_{\hat{K}_{i}}) + \Vert \Delta G_i\Vert_F \mathbf{1}^T(\vert \mathcal{X}_{\hat{K}_{i}}\vert_{abs}+\vert \mathcal{X}_{\hat{K}_{i+1}}\vert_{abs})\mathbf{1} \nonumber\\
    &\leq \Vert \Delta G_i\Vert_F \mathbf{1}^T(\vert \mathcal{X}_{\hat{K}_{i}}\vert_{abs}+\vert \mathcal{X}_{\hat{K}_{i+1}}\vert_{abs})\mathbf{1}, \label{key_inequality_1}
\end{align}
where $\vert \mathcal{X}_{\hat{K}_{i}}\vert_{abs}$ denotes the matrix obtained from $\mathcal{X}_{\hat{K}_{i}}$ by taking the absolute value of each entry. Thus by (\ref{key_inequality_1}) and the definition of $\tilde{G}_i$, we have
\begin{equation}\label{key_inequality_2}
    x^T\mathcal{L}_{\hat{A}_{i+1}}(\tilde{P}_i)x + \epsilon_1 \leq 0
\end{equation}
where 
\begin{equation*}
\begin{split}
    \epsilon_1 &= x^T(Q+\hat{K}_{i+1}^TR\hat{K}_{i+1})x \\
    &- \Vert \Delta G_i\Vert_F \mathbf{1}^T(\vert\mathcal{X}_{\hat{K}_{i}}\vert_{abs}+\vert \mathcal{X}_{\hat{K}_{i+1}}\vert_{abs})\mathbf{1}.    
\end{split}
\end{equation*}
For any $x$ on the unit ball, $\vert\mathbf{1}^Tx\vert_{abs}\leq \sqrt{n}$. Similarly, for any $K\in\mathbb{R}^{m\times n}$, by the definition of induced matrix norm, $\vert\mathbf{1}^TKx\vert_{abs}\leq\Vert K\Vert_2 \sqrt{m}$. This implies
\begin{equation*}
    \left\vert\mathbf{1}^T\left[\begin{array}{c}
    I \\
    -K
    \end{array}\right]x\right\vert_{abs} = \left\vert\mathbf{1}^Tx - \mathbf{1}^TKx\right\vert_{abs} \leq \sqrt{m}(\sqrt{n} + \Vert K\Vert_2),
\end{equation*}
which means $\mathbf{1}^T\vert \mathcal{X}_K\vert_{abs}\mathbf{1}\leq m(\sqrt{n} + \Vert K\Vert_2)^2$. Thus
$$\Vert \Delta G_i\Vert_F \mathbf{1}^T(\vert \mathcal{X}_{\hat{K}_{i}}\vert_{abs}+\vert \mathcal{X}_{\hat{K}_{i+1}}\vert_{abs})\mathbf{1}<1.$$
Then $Q>I_n$ leads to
$$x^T\left(\hat{A}_{i+1}^T\tilde{P}_i+\tilde{P}_i\hat{A}_{i+1}\right)x<0$$
for all $x$ on the unit ball. So $\hat{K}_{i+1}$ is stabilizing by the Lyapunov criterion \cite[Lemma 12.2]{wonham1974linear}.

By definition, 
\begin{align}
        \Vert\hat{K}_{i+1}\Vert_F
        &\leq \Vert \hat{G}_{i,22}^{-1}\Vert_F(1 + \Vert B^T\tilde{P}_i\Vert_F)\nonumber \\
        &\leq \Vert R^{-1}\Vert_F(1-\Vert \tilde{G}_{i,22}^{-1}(\hat{G}_{i,22}-\tilde{G}_{i,22})\Vert_F)^{-1}\nonumber\\
        &\times(1 + \Vert B^T\tilde{P}_i\Vert_F) \nonumber\\
        &\leq 2\Vert R^{-1}\Vert_F(1 + \Vert B^T\tilde{P}_i\Vert_F).
\end{align}
where the second inequality comes from \cite[Inequality (5.8.2)]{horn2012matrix}, and the last inequality is due to \eqref{R_inverse_pertrubed_bound}. This completes the proof.
\end{proof}
\begin{lemma}\label{lemma_boundedness}
If
\begin{equation}\label{error_condition_bounded}
   \Vert \Delta G_i\Vert_F< (1+i^2)^{-1}a_i,\quad i=1,\cdots, \bar{i},
\end{equation}
where $a_i$ is defined in Lemma \ref{lemma_stability}, then
\begin{equation*}
    \Vert\tilde{P}_i\Vert_F\leq 6\Vert \tilde{P}_1\Vert_F,\quad \Vert\hat{K}_{i}\Vert_F\leq C_0,
\end{equation*}
for $i=1,\cdots,\bar{i}$, where
$$ C_0 = \max\left\{ \Vert\hat{K}_1 \Vert_F, 2\Vert R^{-1}\Vert_F\left(1+6\Vert B^T\Vert_F\Vert \tilde{P}_1\Vert_F\right)\right\}.$$
\end{lemma}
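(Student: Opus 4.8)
The plan is to prove both bounds simultaneously by induction on $i$, reusing the perturbed Lyapunov inequality already derived inside the proof of Lemma \ref{lemma_stability}. First note that \eqref{error_condition_bounded} forces $\Vert\Delta G_i\Vert_F<a_i$ for every $i\leq\bar i$, so Lemma \ref{lemma_stability} applies at each step: whenever $\hat K_i$ is stabilizing, $\hat G_{i,22}$ is invertible, $\hat K_{i+1}$ is stabilizing, and \eqref{K_bound_by_P} holds. Since $\hat K_1$ is stabilizing and $Q>I_n$, each $\tilde P_i$ that arises is the positive definite solution of $\mathcal{L}_{\hat A_i}(\tilde P_i)=-(Q+\hat K_i^TR\hat K_i)$, so $\tilde P_i\geq 0$ by \ref{Lyapunov_solution} in Lemma \ref{Lyapunov_properties} (and $\tilde P_1>0$, which makes the right-hand sides of the claimed bounds meaningful). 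The induction hypothesis at stage $k\leq\bar i$ reads: $\hat K_1,\dots,\hat K_k$ stabilizing, $\hat G_{1,22},\dots,\hat G_{k-1,22}$ invertible, and $\Vert\tilde P_j\Vert_F\leq\Vert\tilde P_1\Vert_F\prod_{l=1}^{j-1}(1+l^{-2})$, $\Vert\hat K_j\Vert_F\leq C_0$ for all $j\leq k$; the base case $k=1$ is immediate.

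The heart of the inductive step is the one-step comparison $\tilde P_{i+1}\leq(1+i^{-2})\tilde P_i$. Starting from $x^T\mathcal{L}_{\hat A_{i+1}}(\tilde P_i)x+\epsilon_1\leq 0$ for unit vectors $x$ (that is, \eqref{key_inequality_2}) together with the estimate $\mathbf{1}^T\vert\mathcal{X}_K\vert_{abs}\mathbf{1}\leq m(\sqrt n+\Vert K\Vert_2)^2$ established in the same proof, the disturbance term inside $\epsilon_1$ is at most $\Vert\Delta G_i\Vert_F\,a_i^{-1}<(1+i^2)^{-1}$ by the definition of $a_i$ and by \eqref{error_condition_bounded}, whence
$$\mathcal{L}_{\hat A_{i+1}}(\tilde P_i)\leq -(Q+\hat K_{i+1}^TR\hat K_{i+1})+(1+i^2)^{-1}I_n.$$
Writing $H_{i+1}=\mathcal{L}_{\hat A_{i+1}}^{-1}(-I_n)\geq 0$ and recalling $\mathcal{L}_{\hat A_{i+1}}(\tilde P_{i+1})=-(Q+\hat K_{i+1}^TR\hat K_{i+1})$, the right-hand side equals $\mathcal{L}_{\hat A_{i+1}}\big(\tilde P_{i+1}-(1+i^2)^{-1}H_{i+1}\big)$, so \ref{Lyapunov_monotone} in Lemma \ref{Lyapunov_properties} gives $\tilde P_{i+1}\leq\tilde P_i+(1+i^2)^{-1}H_{i+1}$. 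The key is then to reabsorb $H_{i+1}$ into $\tilde P_{i+1}$: since $Q>I_n$ we have $\mathcal{L}_{\hat A_{i+1}}(\tilde P_{i+1})\leq -I_n=\mathcal{L}_{\hat A_{i+1}}(H_{i+1})$, hence $H_{i+1}\leq\tilde P_{i+1}$ by the same monotonicity. Substituting gives $\tilde P_{i+1}\leq\tilde P_i+(1+i^2)^{-1}\tilde P_{i+1}$, i.e. $\tilde P_{i+1}\leq(1+i^{-2})\tilde P_i$; and because $0\leq\tilde P_{i+1}\leq(1+i^{-2})\tilde P_i$ with the Frobenius norm monotone on the positive semidefinite cone ($0\leq X\leq Y$ implies $\Vert X\Vert_F^2=\trace(X^2)\leq\trace(XY)\leq\Vert X\Vert_F\Vert Y\Vert_F$), this yields $\Vert\tilde P_{i+1}\Vert_F\leq(1+i^{-2})\Vert\tilde P_i\Vert_F$.

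Telescoping from $i=1$ to $i=\bar i-1$ gives $\Vert\tilde P_i\Vert_F\leq\Vert\tilde P_1\Vert_F\prod_{l=1}^{i-1}(1+l^{-2})$ for all $i\leq\bar i$; since $\prod_{l=1}^{\infty}(1+l^{-2})\leq\exp\big(\sum_{l\geq 1}l^{-2}\big)=e^{\pi^2/6}<6$, the bound $\Vert\tilde P_i\Vert_F\leq 6\Vert\tilde P_1\Vert_F$ follows. The gain bound then drops out of \eqref{K_bound_by_P}: for $2\leq i\leq\bar i$, $\Vert\hat K_i\Vert_F\leq 2\Vert R^{-1}\Vert_F(1+\Vert B^T\Vert_F\Vert\tilde P_{i-1}\Vert_F)\leq 2\Vert R^{-1}\Vert_F(1+6\Vert B^T\Vert_F\Vert\tilde P_1\Vert_F)\leq C_0$, while $\Vert\hat K_1\Vert_F\leq C_0$ by the definition of $C_0$; this closes the inductive step and hence the proof. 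The main obstacle is exactly the step $H_{i+1}\leq\tilde P_{i+1}$: without using $Q>I_n$ to reabsorb $\mathcal{L}_{\hat A_{i+1}}^{-1}(-I_n)$ into $\tilde P_{i+1}$, one would need a uniform-in-$i$ bound on $\Vert\mathcal{L}_{\hat A_{i+1}}^{-1}(-I_n)\Vert$, which is unavailable since the $\hat A_{i+1}$ are only known to be Hurwitz and could approach the stability boundary.
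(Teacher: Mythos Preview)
Your proof is correct and follows essentially the same route as the paper: both derive the one-step comparison $\tilde P_{i+1}\leq(1+i^{-2})\tilde P_i$ from \eqref{key_inequality_2} together with $Q>I_n$, then telescope and invoke \eqref{K_bound_by_P}. The only cosmetic difference is that you reabsorb $H_{i+1}=\mathcal{L}_{\hat A_{i+1}}^{-1}(-I_n)$ into $\tilde P_{i+1}$ via $H_{i+1}\leq\tilde P_{i+1}$, whereas the paper reabsorbs it into $\tilde P_i$ via $(1-\epsilon_{2,i})H_{i+1}<\tilde P_i$; both yield the identical growth factor, and your self-contained bound $\prod_{l\geq 1}(1+l^{-2})\leq e^{\pi^2/6}<6$ replaces the paper's citation for the same estimate.
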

\begin{proof}
Inequality (\ref{key_inequality_2}) yields
\begin{equation}\label{key_inequality_3}
    \mathcal{L}_{\hat{A}_{i+1}}(\tilde{P}_i) + (Q+\hat{K}_{i+1}^TR\hat{K}_{i+1}) - \epsilon_{2,i}I < 0.
\end{equation}
where
$$\epsilon_{2,i} = \Vert \Delta G_i\Vert_F \mathbf{1}^T(\vert \mathcal{X}_{\hat{K}_{i}}\vert_{abs}+\vert \mathcal{X}_{\hat{K}_{i+1}}\vert_{abs})\mathbf{1}<1.$$
Inserting (\ref{eRPI_PE}) into above inequality, and using \ref{Lyapunov_monotone} in Lemma \ref{Lyapunov_properties}, we have
\begin{equation}\label{key_inequality_4_1}
    \tilde{P}_{i+1} < \tilde{P}_{i} + \epsilon_{2,i}\mathcal{L}_{\hat{A}_{i+1}}^{-1}(-I).
\end{equation}
With $Q>I_n$, (\ref{key_inequality_3}) yields
\begin{equation*}
    \mathcal{L}_{\hat{A}_{i+1}}(\tilde{P}_i) + (1 - \epsilon_{2,i})I < 0.
\end{equation*}
Similar to (\ref{key_inequality_4_1}), we have
\begin{equation}\label{key_inequality_4_2}
    \mathcal{L}_{\hat{A}_{i+1}}^{-1}(-I)<\frac{1}{1-\epsilon_{2,i}}\tilde{P}_i.
\end{equation}
From (\ref{key_inequality_4_1}) and (\ref{key_inequality_4_2}), we obtain
\begin{equation*}
    \tilde{P}_{i+1}<\left(1+\frac{\epsilon_{2,i}}{1-\epsilon_{2,i}}\right)\tilde{P}_i.
\end{equation*}
By definition of $\epsilon_{2,i}$ and condition \eqref{error_condition_bounded},
$$\frac{\epsilon_{2,i}}{1-\epsilon_{2,i}} \leq \frac{1}{i^2},\quad i=1,\cdots,\bar{i}.$$
Then \cite[Theorem 3]{InfinitSeries} yields
$$\tilde{P}_i\leq 6\tilde{P}_1,\quad i=1,\cdots,\bar{i}.$$
An application of \eqref{K_bound_by_P} completes the proof.
\end{proof}
Now we are ready to prove Lemma \ref{lemma_converge_to_neighbourhood}.
\begin{proof}[Proof of Lemma \ref{lemma_converge_to_neighbourhood}]
Consider Procedure \ref{procedure_robust_policy_iteration} confined to the first $\bar{i}$ iterations, where $\bar{i}$ is a sufficiently large integer to be determined later in this proof. Suppose
\begin{equation}\label{error_bound_2}
        \Vert\Delta G_i\Vert_F<b_{\bar{i}}\triangleq\frac{1}{2m(1+\bar{i}^2)}\left(\sqrt{n} + C_0\right)^{-2}.
\end{equation}
Condition (\ref{error_bound_2}) implies condition (\ref{error_condition_bounded}). Thus $\hat{K}_i$ is stabilizing, $\hat{G}_{i,22}^{-1}$ is invertible and $\tilde{P}_i\leq 6\Vert \tilde{P}_1\Vert_F$. By (\ref{eRPI_PE}) we have
\begin{equation*}
\begin{split}
\mathcal{L}_{\hat{A}_{i+1}}(\tilde{P}_{i+1}-\tilde{P}_{i})&= -Q-\hat{K}_{i+1}^TR\hat{K}_{i+1}-\mathcal{L}_{\hat{A}_{i+1}}(\tilde{P}_{i}).
\end{split}
\end{equation*}
Letting $E_i = \hat{K}_{i+1} - R^{-1}B^T\tilde{P}_i$, the above equation can be rewritten as
\begin{equation}\label{perturbed_newton}
    \tilde{P}_{i+1} = \tilde{P}_i - \mathcal{N}(\tilde{P_i}) + \mathcal{L}_{\mathcal{A}(\tilde{P_i})}^{-1}(\mathscr{E}_i),
\end{equation}
where $\mathcal{N}(\tilde{P_i}) = \mathcal{L}_{\mathcal{A}(\tilde{P_i})}^{-1}\circ\mathcal{R}(\tilde{P}_i),$ and 
\begin{equation*}
    \begin{split}
        \mathcal{R}(Y) &= Q + A^TY + YA - YBR^{-1}B^TY,  \\
    \mathscr{E}_i &= - E_i^TRE_i + \mathcal{L}_{BE_i}(\tilde{P}_{i+1}-\tilde{P}_{i}).
    \end{split}
\end{equation*}
Given $\hat{K}_1$, let $\mathcal{M}_{\bar{i}}$ denote the set of all possible $\tilde{P}_i$, generated by (\ref{perturbed_newton}) under condition (\ref{error_bound_2}). By definition, $\{\mathcal{M}_j\}_{j=1}^\infty$ is a nondecreasing sequence of sets, i.e., $\mathcal{M}_1\subset \mathcal{M}_2 \subset \cdots$. Define $\mathcal{M} = \cup_{j=1}^\infty\mathcal{M}_j$, $\mathcal{D} = \{P\in \mathbb{S}^n\ \vert\ \Vert P\Vert_F\leq 6\Vert \tilde{P}_1\Vert_F\}$. Then by Lemma \ref{lemma_boundedness} and Theorem \ref{theorem_standard_PI}, $\mathcal{M}\subset\mathcal{D}$; $\mathcal{M}$ is compact; $\mathcal{A}(P)$ is Hurwitz for any $P\in\mathcal{M}$.\par
Now we prove that for any $P^1,\ P^2\in \mathcal{M}$, there exists a constant $L$, such that
\begin{equation}\label{Newton_Lipschitz}
    \Vert\mathcal{N}(P^1)-\mathcal{N}(P^2)\Vert_F\leq L\Vert P^1-P^2\Vert_F.
\end{equation}
In \ref{Lyapunov_perturbation} of Lemma \ref{Lyapunov_properties}, let $X=\mathcal{A}(P^1)$, $X+\Delta X = \mathcal{A}(P^2)$, $Z = \mathcal{R}(P^1)$, $Z + \Delta Z = \mathcal{R}(P^2)$, $Y = -\mathcal{N}(P^1)$, $Y+\Delta Y = -\mathcal{N}(P^2)$, then, we have
\begin{align}\label{Newton_Lipschitz_inequality}
        &\Vert \Delta Y\Vert_F\leq\sqrt{n}\left\Vert \Delta Y\right\Vert_2 \nonumber\\
        &\leq \sqrt{n}\left(\Vert \mathcal{R}(P^1) - \mathcal{R}(P^2) \Vert_2\right. \nonumber\\
        &\left.+ 2\Vert \mathcal{A}(P^1) - \mathcal{A}(P^2)\Vert_2  \Vert \mathcal{N}(P^2) \Vert_2\right)\Vert H\Vert_2 \nonumber \\
        &\leq\sqrt{n} \left(C_1 + 2\Vert \mathcal{N}(P^2) \Vert_2\Vert BR^{-1}B^T\Vert_2\right)\Vert H\Vert_2\Vert P^1 -P^2 \Vert_2 \nonumber\\
        &\leq \sqrt{n}\left(C_1 + 2C_2\Vert BR^{-1}B^T\Vert_2\right)\Vert H\Vert_2\Vert P^1 -P^2 \Vert_2 \nonumber\\
        &\leq 6\sqrt{n}\Vert \tilde{P}_1\Vert_F\left(C_1 + 2C_2\Vert BR^{-1}B^T\Vert_2\right)\Vert P^1 -P^2 \Vert_2 \nonumber\\
        &\leq L \Vert P^1 -P^2 \Vert_F,
\end{align}
where $C_1$ is the Lipschitz constant of $\mathcal{R}(\cdot)$ on compact set $\mathcal{D}$. The fourth inequality in (\ref{Newton_Lipschitz_inequality}) is because of $\mathcal{M}\in\mathcal{D}$, thus $\Vert\mathcal{N}(P^2)\Vert_2\leq C_2$ for some $C_2>0$. For the fifth inequality in (\ref{Newton_Lipschitz_inequality}), by \ref{Lyapunov_solution} in Lemma \ref{Lyapunov_properties}
$$H = \mathcal{L}^{-1}_{\mathcal{A}(P^1)}(-I) = \int_0^\infty e^{\mathcal{A}^T(P^1)t}e^{\mathcal{A}(P^1)t}\dd t.$$
By definition of the set $\mathcal{M}$, there exist $G^1$ and stabilizing control gain $K^1$ associated with $P^1\in\mathcal{M}$, such that $\mathcal{H}(G^1,K^1)=0$. So for any $x\in\mathbb{R}^n$,
\begin{equation*}
    x^T\mathcal{R}(P_i)x = \min_{K\in\mathbb{R}^{m\times n}} x^T\mathcal{H}(G^1,K)x \leq 0.
\end{equation*}
This implies
$$\mathcal{L}_{\mathcal{A}(P^1)}(P^1)\leq -Q - P^1BR^{-1}B^TP^1<-I.$$
An application of \ref{Lyapunov_monotone} in Lemma \ref{Lyapunov_properties} to the above inequality leads to $H<P^1$. Thus we have
\begin{equation}\label{bound_on_H}
    \Vert H\Vert_F \leq 6\Vert \tilde{P}_1\Vert_F, \qquad \forall P^1\in\mathcal{M}\subset\mathcal{D},
\end{equation}
which implies the fifth inequality in (\ref{Newton_Lipschitz_inequality}).\par
Define $\{P_{k\vert i}\}_{k=0}^{\infty}$ as the sequence generated by \eqref{Kleinman_one_step} with $P_{0\vert i}=\tilde{P}_i$.
Similar to \eqref{perturbed_newton}, we have
\begin{equation}\label{exact_newton}
    P_{k+1\vert i} = P_{k\vert i} - \mathcal{N}(P_{k\vert i}), \quad k\in\mathbb{Z}_+.
\end{equation}
By Theorem \ref{theorem_standard_PI} and the fact that $\mathcal{M}$ is compact, there exists $k_0\in\mathbb{Z}_+$, such that 
\begin{equation}\label{triangle_inequality_first_part}
\Vert P_{k_0\vert i}-P^*\Vert_F<\delta_0/2, \qquad \forall P_{0\vert i}\in\mathcal{M}.
\end{equation}
Suppose
\begin{equation}\label{perturb_small}
    \Vert\mathcal{L}_{\mathcal{A}(\tilde{P}_{i+k})}^{-1}(\mathscr{E}_{i+k})\Vert_F<\mu,\qquad k=0,\cdots, \bar{i} - i.
\end{equation}
We find an upper bound on $\Vert P_{k\vert i}-\tilde{P}_{i+k}\Vert_F$. Notice that from (\ref{perturbed_newton}) and (\ref{exact_newton}),
\begin{equation*}
    \begin{split}
        P_{k\vert i} &= P_{0\vert i} - \sum_{j=0}^{k-1} \mathcal{N}(P_{j\vert i})  \\
        \tilde{P}_{i+k} &= \tilde{P}_i - \sum_{j=0}^{k-1} \mathcal{N}(\tilde{P}_{i+j}) + \sum_{j=0}^{k-1} \mathcal{L}_{\mathcal{A}(\tilde{P}_{i+j})}^{-1}(\mathscr{E}_{i+j}).
    \end{split}
\end{equation*}
Then (\ref{Newton_Lipschitz}) and (\ref{perturb_small}) yield 
\begin{equation*}
    \begin{split}
        \Vert P_{k\vert i} - \tilde{P}_{i+k}\Vert_F \leq k\mu + \sum_{j=0}^{k-1} L \Vert P_{j\vert i} - \tilde{P}_{i+j}\Vert_F.
    \end{split}
\end{equation*}
An application of the Gronwall inequality \cite[Theorem 4.1.1.]{agarwal2000difference} to the above inequality implies
\begin{equation}\label{continuous_dependence}
    \Vert P_{k\vert i} - \tilde{P}_{i+k}\Vert_F \leq k\mu + L\mu \sum_{j=0}^{k-1}j (1+L)^{k-j-1}.
\end{equation}
The error term in (\ref{perturbed_newton}) satisfies
\begin{align}
    \left\Vert \mathcal{L}_{\mathcal{A}(\tilde{P_i})}^{-1}(\mathscr{E}_i)\right\Vert_F 
    &= \left\Vert \int_0^\infty e^{\mathcal{A}^T(\tilde{P}_i)t}\otimes e^{\mathcal{A}(\tilde{P}_i)t} \dd t \vect\left( \mathscr{E}_i\right)\right\Vert_2 \nonumber \\
    &\leq C_3\Vert \mathscr{E}_i\Vert_F,\label{perturbed_inverse_lypunov}
\end{align}
where $C_3$ is a constant and the inequality is due to (\ref{bound_on_H}). Let $\bar{i}>k_0$, and $k=k_0$, $i = \bar{i}-k_0$ in \eqref{continuous_dependence}. Then by condition \eqref{error_bound_2}, Lemma \ref{lemma_boundedness}, (\ref{perturb_small}), and (\ref{continuous_dependence}), there exists $i_0\in\mathbb{Z}_+$, $i_0>k_0$, such that $\Vert P_{k_0\vert \bar{i}-k_0} -\tilde{P}_{\bar{i}}\Vert_F<\delta_0/2$, for all $\bar{i}\geq i_0$. Setting $i = \bar{i}-k_0$ in \eqref{triangle_inequality_first_part}, the triangle inequality yields $\tilde{P}_{\bar{i}}\in\mathcal{B}_{\delta_0}(P^*)$, for $\bar{i}\geq i_0$. Then in \eqref{error_bound_2}, choosing $\bar{i}\geq i_0$ such that $\delta_2 = b_{\bar{i}}<\min(\gamma^{-1}(\epsilon),\delta_1)$ completes the proof.
\end{proof}

\subsection{Proof of Theorem \ref{theorem_data_driven_robut_PI}}
Let $x = \tilde{x}$, $u = \hat{u}$ in \eqref{ADP_LTI}, where $\tilde{x}$ denote the state trajectory of system \eqref{LTI_sys} when the input is $\hat{u}$ and $x(0) = \hat{x}(0)$. Since $Q$, $R$ are known, we have
\begin{equation}\label{ADP_LTI_tilde}
   \tilde{\mathcal{Y}}_i \triangleq \left[\begin{array}{cc}
         \svec(\tilde{P}_i)  \\
         \vect(\tilde{K}_{i+1}) 
    \end{array}\right] = \Theta(\hat{K_i})^\dagger\Xi(\hat{K}_i),
\end{equation}
where $\tilde{K}_{i+1} = R^{-1}B^T\tilde{P}_i$ and
$$\hat{G}_i = \left[\begin{array}{cc}
        Q + A^T\hat{P}_i + \hat{P}_iA & \hat{K}^T_{i+1}R \\
        R\hat{K}_{i+1} & R
        \end{array}\right].$$
Define $\hat{\mathcal{Y}}_i = [\svec^T(\hat{P}_i),\vect^T(\hat{K}_{i+1})]^T$. It is easy to prove $\Vert \Delta G_i\Vert_F\leq \gamma_1(\Vert\hat{\mathcal{Y}}_i - \tilde{\mathcal{Y}}_i\Vert_2)$, where $\gamma_1\in\mathcal{K}$. Thus the proof amounts to showing that there exists a $c$, such that 
\begin{equation}\label{equivalent_proof_of_ADP_continuous}
    \Vert \hat{\mathcal{Y}}_i - \tilde{\mathcal{Y}}_i\Vert_2<\gamma_1^{-1}(\delta_2),\  \forall \Vert w\Vert_\infty<c, \  \forall i\in\mathbb{Z}_+, \ i>0.
\end{equation}
By Corollary \ref{corollary_global}, $\Vert \tilde{P}_i\Vert_F<M_0$, $\Vert\hat{K}_{i}\Vert_F<M_1$ for some $M_1>0$, if $\Vert \Delta G_i\Vert_F<\delta_2$. Let $M=\max(M_0,M_1)$, and define set
\begin{equation*}
    \begin{split}
        \mathcal{F} = \{K\in\mathbb{R}^{m\times n}&\vert \Vert K\Vert_F\leq M, \Vert P_K\Vert_F\leq M, \\
        &(A-BK) \text{ is Hurwitz}\},
    \end{split}
\end{equation*}
where $P_K = \mathcal{L}^{-1}_{A-BK}(-Q-K^TRK)$. Then to fulfill \eqref{equivalent_proof_of_ADP_continuous}, it is sufficient to prove
\begin{equation}\label{sensitivity_perturbation_least_square}
      \Vert \Theta(K)^\dagger\Xi(K) - \hat{\Theta}(K)^\dagger\hat{\Xi}(K) \Vert_2
      < \gamma_1^{-1}(\delta_2),
\end{equation}
for all $K\in\mathcal{F}$, and $\Vert w\Vert_\infty<c$.

By \cite[Lemma 6]{JIANG20122699}, $\Theta(K)$ has full column rank for all $K\in\mathcal{F}$. By linear system theory \cite{wonham1974linear},
\begin{equation}\label{data_error_can_be_small}
    \Vert [I_{xx},I_{xu}] - [I_{\hat{x}\hat{x}},I_{\hat{x}\hat{u}}]\Vert_F<\gamma_2(\Vert w\Vert_\infty),\quad \gamma_2\in\mathcal{K}.
\end{equation}
Then by continuity, and the fact that $\mathcal{F}$ is compact (otherwise $\Vert P_K\Vert_F\leq M$ cannot be true), there exists $c'>0$ such that for any $\Vert w\Vert_\infty<c'$ and any $K\in\mathcal{F}$, $\hat{\Theta}(K)$ has full column rank, and
\begin{equation*}
    \begin{split}
        \Vert \Theta(K)\Vert_2&>\alpha_1, \quad \Vert \Theta(K)^\dagger\Xi(K)\Vert_2<\alpha_2,\quad \\ \kappa(\Theta(K))&<\alpha_3, \quad \kappa(\Theta(K)) \Vert \Theta(K) - \hat{\Theta}(K)\Vert_2<0.5\alpha_1,
    \end{split}
\end{equation*}
where $\alpha_1$, $\alpha_2$ and $\alpha_3$ are all positive finite constants, and $\kappa(\Theta(K))$ is condition number of $\Theta(K)$. Then by \cite[Theorem 1.4.6.]{doi:10.1137/1.9781611971484}, we have
\begin{equation*}
    \begin{split}
      &\Vert \Theta(K)^\dagger\Xi(K) - \hat{\Theta}(K)^\dagger\hat{\Xi}(K) \Vert_2
      < 2\alpha_3\alpha_1^{-1}\times\\
      &\left(\alpha_2\Vert \Theta(K) - \hat{\Theta}(K) \Vert_F + \Vert \Xi(K) - \hat{\Xi}(K)\Vert_F\right),
    \end{split}
\end{equation*}
for any $K\in\mathcal{F}$. By \eqref{data_error_can_be_small}, we can always choose a $c<c'$, such that \eqref{sensitivity_perturbation_least_square} is satisfied, which completes the proof.
\bibliography{RPIRef}
\bibliographystyle{IEEEtran}

\end{document}